\documentclass[twocolumn,aps,pra,showpacs,preprintnumbers, superscriptaddress, amsmath,amssymb]{revtex4}
\usepackage{amstext,mathrsfs,amsthm,bm}
\usepackage{mathtools} 
\usepackage{stmaryrd} 
\usepackage{hyperref}

%
\DeclareFontFamily{OT1}{rsfs}{}
\DeclareFontShape{OT1}{rsfs}{m}{n}{<5> rsfs5 <7> rsfs7 <10> rsfs10}{}
\DeclareSymbolFont{mathrsfs}{OT1}{rsfs}{m}{n}
\DeclareSymbolFontAlphabet{\mathrsfs}{mathrsfs}
%

\newcommand*{\bra}[1]{\left\langle{#1}\right|} 
\newcommand*{\ket}[1]{\left| #1 \right\rangle}

\newcommand*{\commut}[1]{[ #1 ]}
\newcommand*{\Commut}[1]{\left[ #1 \right]}

\newcommand*{\acommut}[1]{\{ #1 \}}
\newcommand*{\Acommut}[1]{\left\{ #1 \right\}}

\newcommand*{\poisson}[1]{\llbracket #1 \rrbracket}
\newcommand*{\Poisson}[1]{\left\llbracket #1 \right\rrbracket}

\newtheorem{theorem}{Theorem}
\newtheorem{lemma}{Lemma}


\begin{document}

\author{Denys I. Bondar}
\email{dbondar@princeton.edu}
\affiliation{Department of Chemistry, Princeton University, Princeton, NJ 08544, USA}
\affiliation{Department of Physics and Astronomy, University of Waterloo, Waterloo, Ontario N2L 3G1, Canada} 

\author{Renan Cabrera}
\email{rcabrera@princeton.edu}
\affiliation{Department of Chemistry, Princeton University, Princeton, NJ 08544, USA} 

\author{Robert R. Lompay}
\email{rlompay@gmail.com}
\affiliation{Department of Theoretical Physics, Uzhgorod National University, Uzhgorod 88000, Ukraine}

\author{Misha Yu. Ivanov}
\email{m.ivanov@imperial.ac.uk}
\affiliation{Imperial College, London SW7 2BW, United Kingdom}

\author{Herschel A. Rabitz}
\email{hrabitz@princeton.edu}
\affiliation{Department of Chemistry, Princeton University, Princeton, NJ 08544, USA} 

\title{Operational Dynamic Modeling Transcending Quantum and Classical Mechanics} 

\begin{abstract}
We introduce a general and systematic theoretical framework for Operational Dynamic Modeling (ODM) by combining a kinematic description of a model with the evolution of the dynamical average values. The kinematics includes the algebra of the observables and their defined averages. The evolution of the average values is drawn in the form of Ehrenfest-like theorems. We show that ODM is capable of encompassing wide ranging dynamics from classical non-relativistic mechanics to quantum field theory. The generality of ODM should provide a basis for formulating novel theories.  
\\
\\
This paper is published in \href{http://prl.aps.org/abstract/PRL/v109/i19/e190403}{{\it Phys. Rev. Lett.} {\bf 109}, 190403 (2012)}
\end{abstract}

\date{\today}

\pacs{03.65.Ta, 03.50.Kk, 03.65.Ca, 03.70.+k}

\maketitle

{\it Introduction.} One primary goal in science is to construct models possessing predictive capability. This endeavor is usually achieved by trial and error, with a proposed model either subsequently revised or completely discarded if its predictions do not agree with experimental results. Generally such a process is slow, hence automatization has been attempted \cite{Schmidt2009, King2009}. 

In this Letter, we develop a universal and systematic theoretical framework for {\it Operational Dynamic Modeling} (ODM) based on the evolution of dynamical average values.  As an illustration of ODM's scope, we infer quantum, classical, and unified quantum-classical mechanics. In order to construct  a system's dynamical model, we first postulate an associated kinematic description consisting of two independent components: i) the definition of the observables' average, and ii) the algebra of the observables. ODM applied to observable data, given in the form of Ehrenfest-like theorems [see, e.g., Eq. (\ref{AveragedEquationsForXandP})], returns the dynamical model (see Fig. \ref{Fig_GeneralEhrenfestQuantization} for a graphical summary). The system's kinematic description can also be deduced from complementary experiments. For example, if the results of a sequential measurement depend on the measurements' order, then the algebra of observables must be non-commutative [see comments after Eqs. (\ref{Ch2_ClassicalCommutator}) and (\ref{XP_CommutationalRelation})].  Limited access to experiments capable of firmly establishing the kinematics does not preclude hypothesizing plausible kinematic descriptions. Some of these hypotheses may be rejected within ODM by revealing their incompatibility with observable dynamical data
\footnote{For example, we demonstrate in Ref. \cite{Bondar2011} that finite-dimensional quantum and classical mechanics are not compatible with the Ehrenfest theorems (\ref{EhrenfestThs}).
}. 

In the spirit of ODM, starting from the Ehrenfest theorems [Eq. (\ref{EhrenfestThs})], we will obtain the Schr\"{o}dinger equation if the momentum and coordinate operators obey the canonical commutation relation, and the classical Liouville equation if the momentum and coordinate operators commute. To establish a link between quantum and classical mechanics, we introduce a generalized algebra of observables, incorporating both quantum and classical kinematics, that ultimately leads to {\it a unified quantum-classical mechanics}. Most importantly, we will show that ODM is applicable to a wide range of physical models from non-relativistic classical mechanics to quantum field theories, thus making ODM an important tool for formulating future models. 

{\it Preparing Dynamical Data.} In the current work, we present the conceptual and theoretical framework of ODM putting aside issues of handling noise contaminated experimental data. Assume we have multiple copies of either a quantum or classical system  (without loss of generality we consider single-particle one-dimensional systems throughout). Suppose we can precisely measure different copies of the particle's coordinate $x$ and momentum $p$ at times $\{ t_k \}_{k=1}^K$.  Upon performing ideal measurements of the coordinate or momentum on the $n$-th copy, we experimentally obtain $\{ x_n(t_k) \}$
 and $\{ p_n(t_k) \}$, $n=1,\ldots,N$, requiring a total of $2KN$ observations. Time interpolation of these data points returns the functions $x_n(t)$ and $p_n(t)$. We may then calculate the  statistical moments 
$ \overline{ [x(t)]^l } = \frac{1}{N} \sum_{n=1}^N [x_n (t)]^l$ and
$\overline{ [p(t)]^l } = \frac{1}{N} \sum_{n=1}^N [p_n (t)]^l$ for $l=1,2,3,\ldots$  We make the {\it ansatz}, resembling a Taylor series with coefficients $a_l$, $b_l$, $c_{k,l}$, $d_l$, $e_l$, and $f_{k,l}$, that the first derivative of $\overline{x(t)} = \overline{ [x(t)]^1 }$ and $\overline{p(t)} = \overline{ [p(t)]^1 }$ satisfy
\begin{align}
	\frac{d}{dt} \overline{x(t)} &= \sum_l \left( a_l \overline{[x(t)]^l} + b_l \overline{[p(t)]^l} \right)
		+ \sum_{k,l \neq 0} c_{k,l} \overline{[x(t)]^l [p(t)]^k}, \notag\\
	\frac{d}{dt} \overline{p(t)} &= \sum_l \left( d_l \overline{[x(t)]^l} + e_l \overline{[p(t)]^l} \right)
		+ \sum_{k,l \neq 0} f_{k,l} \overline{[x(t)]^l [p(t)]^k}. \notag
\end{align}
For non-dissipative quantum and classical systems, these relations reduce to
\begin{align}\label{AveragedEquationsForXandP}
	m\frac{d}{dt} \overline{x(t)} = \overline{p(t)}, \qquad \frac{d}{dt} \overline{p(t)} = \overline{-U'(x)}(t),
\end{align}
where $\overline{-U'(x)}(t) = \sum_l d_l  \overline{[x(t)]^l}$.

{\it Kinematic Description.} Generalizing Schwinger's motto ``quantum mechanics: symbolism of atomic measurements'' \cite{Schwinger2003}, we adapt that any physical model is a symbolic representation of the experimental evidence supporting it. The mathematical symbolism for this purpose needs to be considered. A formalism specialized to describe a specific class of behavior (e.g., classical mechanics expressed in terms of phase space trajectories) can be effective, but it may be unsuitable for connecting different classes of phenomena (e.g.,  unifying quantum and classical mechanics). In this case a general and versatile formalism is preferred. Building a formalism around Hilbert space is a suitable candidate for this role. Hilbert space is well understood, rich in mathematical structure, and convenient for practical computations. 	

Consider the postulates: i) The states of a system are represented by normalized vectors $\ket{\Psi}$ of a complex Hilbert space, and the observables are given by self-adjoint operators acting on this space; ii) The expectation value of a measurable $\hat{A}$ at time $t$ is $\overline{A}(t) =  \bra{\Psi (t)} \hat{A} \ket{\Psi(t)}$; iii) The probability that a measurement of an observable $\hat{A}$ at time $t$ yields $A$ is $\left|\langle A \ket{\Psi(t)}\right|^2$, where $\hat{A} \ket{A} = A \ket{A}$; iv) The state space of a composite system is the tensor product of the subsystems' state spaces. Having accepted these postulates, the rest -- state spaces, observables, and the equations of motion -- can be deduced directly from  observable data. Importantly, these axioms are just the well-known quantum mechanical postulates with the adjective ``quantum'' {\it removed, as $\ket{\Psi}$ is a general state encompassing classical and quantum behavior.} We will demonstrate below that these postulates are sufficient to capture all the features of both quantum {\it and} classical mechanics as well as the associated hybrid mechanics. Equation (\ref{AveragedEquationsForXandP}) rewritten in terms of the axioms becomes
\begin{align}\label{EhrenfestThs}
	m\frac{d}{dt} \bra{\Psi(t)} \hat{x} \ket{\Psi(t)} & = \bra{\Psi(t)} \hat{p} \ket{\Psi(t)}, \notag\\
	\frac{d}{dt} \bra{\Psi(t)} \hat{p} \ket{\Psi(t)} & = \bra{\Psi(t)} -U'(\hat{x}) \ket{\Psi(t)}.
\end{align}

Koopman and von Neumann \cite{Koopman1931, Neumann1932} pioneered the recasting of classical mechanics in a form similar to quantum mechanics by introducing classical complex valued wave functions and representing associated physical observables by means of commuting self-adjoint operators (for modern developments and applications see Refs. \cite{Gozzi1988, Gozzi1989, Wilkie1997, Wilkie1997a, Gozzi2002, DaniloMauro2002, Deotto2003, Deotto2003a, Abrikosovjr2005, Blasone2005, Brumer2006, Carta2006, Gozzi2010, Gozzi2011, Cattaruzza2011}). Our operational formulation is closely related to the approach proposed in Ref. \cite{Bogdanov2004} and recently successfully implemented for quantum state tomography \cite{Bogdanov2010, Bogdanov2011}. Regarding developments of other operational approaches see Ref. \cite{Hardy2011} and references therein.

{\it Inference of Classical Dynamics.} Let $\hat{x}$ and $\hat{p}$ be self-adjoint operators representing the coordinate and momentum observables. The commutation relationship 
\begin{align}\label{Ch2_ClassicalCommutator}
	\commut{\hat{x}, \hat{p}} = 0,
\end{align}
encapsulates two basic experimental facts of classical kinematics: i) the position and momentum can be measured simultaneously with arbitrary accuracy, ii) observed values do not depend on the order of performing the measurements. In terms of our axioms, the dynamical observations of the classical particle's position and momentum are summarized in Eq. (\ref{EhrenfestThs}).

We now derive the equation of motion for a classical state. The application of the chain rule to Eq. (\ref{EhrenfestThs}) gives
\begin{align}\label{Ch2_ExpandedClassicsErhenfest_Th}
	\bra{d\Psi/dt} \hat{x} \ket{\Psi} + \bra{\Psi} \hat{x} \ket{d\Psi/dt} &= \bra{\Psi} \hat{p}/m \ket{\Psi}, \notag\\
	\bra{d\Psi/dt} \hat{p} \ket{\Psi}  + \bra{\Psi} \hat{p} \ket{d\Psi/dt} & = \bra{\Psi} -U'(\hat{x}) \ket{\Psi},
\end{align}
into which we substitute a consequence of Stone's theorem (see Sec. I \ref{Sec_Stones_Th})
\begin{align}\label{Ch2_AlmostLiouville_Eq}
	i \ket{d\Psi(t)/dt} = \hat{L} \ket{\Psi(t)},
\end{align}
and obtain
\begin{align}\label{Avaraged_Equations_for_Liouvillian}
	im \bra{\Psi(t)} \commut{\hat{L}, \hat{x} } \ket{\Psi(t)} &= \bra{\Psi(t)} \hat{p} \ket{\Psi(t)}, \notag\\
	i \bra{\Psi(t)} \commut{\hat{L}, \hat{p}} \ket{\Psi(t)} &= - \bra{\Psi(t)} U'(\hat{x}) \ket{\Psi(t)}.
\end{align}
Since Eq. (\ref{Avaraged_Equations_for_Liouvillian}) must be valid for all possible initial states, the averaging can be dropped, and we have the  system of commutator equations for the motion generator $\hat{L}$,
\begin{align}\label{Ch2_Equations_for_Liouvillian}
	im  \commut{\hat{L}, \hat{x}}  =  \hat{p} , \qquad i \commut{\hat{L}, \hat{p}} = -U'(\hat{x}).
\end{align}
Since $\hat{p}$ and $\hat{x}$ commute, the solution $\hat{L}$ cannot be found by simply assuming $\hat{L} = L(\hat{x}, \hat{p})$ (regarding the definition of functions of operators see Sec. \ref{Sec_Weyl_calculus}). We add into consideration two new operators $\hat{\lambda}_x$ and $\hat{\lambda}_p$ such that
\begin{align}\label{Complete_classical_algebra}
	\commut{ \hat{x}, \hat{\lambda}_x } = \commut{ \hat{p}, \hat{\lambda}_p } = i, 
\end{align}
and the other commutators among $\hat{x}$, $\hat{p}$, $\hat{\lambda}_x$, and $\hat{\lambda}_p$ vanish. The need to introduce auxiliary operators arises in classical dynamics because all the observables commute;  hence, the notion of an individual trajectory can be introduced (see also Sec. \ref{Sec_Classical_FieldTh}). Moreover, the choice of the commutation relationships (\ref{Complete_classical_algebra}) is unique. Equation (\ref{Complete_classical_algebra}) can be considered as an additional axiom. Now we seek the generator $\hat{L}$ in the form $\hat{L} = L(\hat{x}, \hat{\lambda}_x, \hat{p}, \hat{\lambda}_p)$. Utilizing Theorem \ref{Th_Weyl_commutator_theorem} from Sec. \ref{Sec_Weyl_calculus}, we convert the commutator equations (\ref{Ch2_Equations_for_Liouvillian}) into the differential equations
\begin{align}
	m L'_{\lambda_x} (x, \lambda_x, p, \lambda_p) = p, \, L'_{\lambda_p} (x, \lambda_x, p, \lambda_p) = -U'(x),
\end{align}
from which, the generator of classical dynamics $\hat{L}$ is found to be
\begin{align}\label{General_form_of_Liouvillian}
	\hat{L} = \hat{p} \hat{\lambda}_x / m - U'(\hat{x}) \hat{\lambda}_p + f(\hat{x}, \hat{p}),
\end{align}
where $f(x,p)$ is an arbitrary real-valued function.
Equations (\ref{Ch2_AlmostLiouville_Eq}), (\ref{Complete_classical_algebra}), and (\ref{General_form_of_Liouvillian}) represent classical dynamics in an abstract form. 

Let us find the equation of motion for $|\langle p \, x \ket{\Psi(t)}|^2$ by rewriting Eq. (\ref{Ch2_AlmostLiouville_Eq}) in the $xp$-representation (in which $\hat{x} = x$, $\hat{\lambda}_x = -i\partial / \partial x$, $\hat{p} = p$, and $\hat{\lambda}_p = -i \partial / \partial p$),
\begin{align}\label{PreLiouvillianEq}
		\left[ i\frac{\partial }{\partial t} +i\frac{p}{m} \frac{\partial}{\partial x} - i U'(x) \frac{\partial}{\partial p} - f(x,p) \right]  \langle p \, x \ket{\Psi(t)} = 0,
\end{align}
which yields the well known classical Liouville equation for the probability distribution in phase-space $\rho(x,p;t) = |\langle p \, x \ket{\Psi(t)}|^2$,
\begin{align}\label{Liouville_Eq}
	\frac{\partial }{\partial t} \rho(x,p;t)  = \left[ -\frac{p}{m} \frac{\partial}{\partial x} +  U'(x) \frac{\partial}{\partial p}  \right]  \rho(x,p;t).
\end{align}
Thus, we have deduced the classical Liouville equation along with the Koopman-von Neumann theory from Eq. (\ref{EhrenfestThs}) by assuming that the classical momentum and coordinate operators commute.

{\it Inference of Quantum Dynamics.} The hallmark of quantum kinematics is the canonical commutation relation
\begin{equation}\label{XP_CommutationalRelation}
	 \commut{ \hat{x}, \hat{p} } = i\hbar,
\end{equation}
which implies i) the Heisenberg uncertainty principle and ii) the order of performing measurements of the coordinate and momentum does matter \cite{Schwinger2003}. The evolution of expectation values of the quantum coordinate and momentum is governed by the Ehrenfest theorems (\ref{EhrenfestThs}). 

We repeat the algorithm exercised in classical mechanics above. Substituting the definition of the motion generator $\hat{H}$ obtained from Stone's theorem (see Sec. \ref{Sec_Stones_Th})
\begin{align}\label{AbstractSchrodingerEq}
	i\hbar \ket{d \Psi(t)/dt} = \hat{H} \ket{\Psi(t)},
\end{align}
into Eq. (\ref{EhrenfestThs}), we obtain
\begin{align}\label{Commut_Eq_for_Schrod}
	im \commut{\hat{H}, \hat{x}} = \hbar \hat{p}, \qquad i\commut{\hat{H}, \hat{p}} = -\hbar U'(\hat{x}).
\end{align}
Assuming $\hat{H} = H(\hat{x}, \hat{p})$ and utilizing Theorem \ref{Th_Weyl_commutator_theorem} from Sec. \ref{Sec_Weyl_calculus}, the commutation relations in Eq. (\ref{Commut_Eq_for_Schrod}) reduce to $m H'_p (x,p) = p$ and $H'_x (x,p) = U'(x)$. Whence, the familiar quantum Hamiltonian readily follows
\begin{align}\label{Quantum_Hamiltonian}
	\hat{H} = \hat{p}^2 /(2m) + U(\hat{x}).
\end{align} 
Since the Schr\"{o}dinger equation was derived from the Ehrenfest theorems (\ref{EhrenfestThs}) assuming the canonical commutation relation (\ref{XP_CommutationalRelation}), the presentation suggests that the Ehrenfest theorems are more fundamental than the Schr\"{o}dinger equation.

{\it Unification of Quantum and Classical Mechanics.} (For a detailed discussion see Sec. \ref{Sec_Qunatum_to_Calssical}; see also Fig. \ref{Fig_EhrenfestQuantization}) The fundamental difference between non-relativistic classical and quantum mechanics is that the momentum and coordinate operators commute in the former case and do not commute in the latter \cite{Dirac1958, Shirokov1976b, Shirokov1979d}. The operators $\hat{x}$, $\hat{p}$, $\hat{\lambda}_x$, and $\hat{\lambda}_p$ obeying Eq. (\ref{Complete_classical_algebra}) form the classical operator algebra. The unified quantum-classical operator algebra is based on $\hat{x}_q$, $\hat{p}_q$, $\hat{\vartheta}_x$, and $\hat{\vartheta}_p$ satisfying
\begin{align}
	\commut{ \hat{x}_q, \hat{p}_q } = i\hbar\kappa, \qquad
	\commut{ \hat{x}_q, \hat{\vartheta}_x } = \commut{ \hat{p}_q, \hat{\vartheta}_p } = i,
\end{align}
$0 \leqslant \kappa \leqslant 1$, while all the other commutators among $\hat{x}_q$, $\hat{p}_q$, $\hat{\vartheta}_x$, and $\hat{\vartheta}_p$ vanish. The operators $\hat{\vartheta}_x$ and $\hat{\vartheta}_p$ are simply introduced so that the quantum algebra (i.e., $\kappa = 1$) is consistent with the classical algebra. The limit $\kappa \to 0$ defines the quantum-to-classical transition with the quantum algebra smoothly transforming into the classical one as $\kappa \to 0$. Since $\hbar$ enters in the time derivative of Sch\"{o}dinger equation (\ref{AbstractSchrodingerEq}) as well as in the commutator relationship (\ref{XP_CommutationalRelation}), the limit $\hbar\to 0$ encompasses more than the criterion that the coordinate and momentum operators must commute in the classical limit. This situation motivated the introduction of the parameter $\kappa$.

As the first step towards unification of both mechanics, we apply ODM to 
\begin{align}\label{EhrenfestThrs_in_quantumclass_case}
	m\frac{d}{dt} \bra{\Psi(t)} \hat{x}_q \ket{\Psi(t)} &= \bra{\Psi(t)} \hat{p}_q \ket{\Psi(t)}, \notag\\
	\frac{d}{dt} \bra{\Psi(t)} \hat{p}_q \ket{\Psi(t)} &= \bra{\Psi(t)} -U'(\hat{x}_q) \ket{\Psi(t)},
\end{align}
and obtain the Hamiltonian 
\begin{align}\label{MixedQuantumClassical_Hamiltonian}
	\hat{\mathcal{H}} = \frac{1}{\kappa} \left[ \frac{ \hat{p}_q^2 }{2 m} + U(\hat{x}_q) \right] 
					+ F \left( \hat{p}_q - \hbar\kappa \hat{\vartheta}_x, \hat{x}_q + \hbar\kappa \hat{\vartheta}_p \right),
\end{align}
such that  $i\hbar \ket{d \Psi(t)/dt} = \hat{\mathcal{H}} \ket{\Psi(t)}$, where $F$ is an arbitrary real-valued smooth function. Note that no  Ehrenfest theorems for the observables $\hat{O} = O\left( \hat{x}_q, \hat{p}_q \right)$ can specify the function $F$ because $\commut{\hat{F}, \hat{O}} = 0$. Hence, the function $F$ is experimentally undetectable. We shall utilize this freedom by finding an $F$ which enforces that the Hamiltonian (\ref{MixedQuantumClassical_Hamiltonian}) smoothly transform to become the Liouvillian (\ref{General_form_of_Liouvillian}) in the classical limit.

The classical and quantum algebras are isomorphic. The quantum operators can be constructed as linear combinations of the classical operators in many ways, e.g.,
\begin{align}\label{ClassicQuantumAlg_isomorphism}
	\hat{x}_q = \hat{x} - \hbar\kappa \hat{\lambda}_p / 2, \qquad & \hat{p}_q = \hat{p} + \hbar\kappa \hat{\lambda}_x / 2, \notag\\
	\hat{\vartheta}_x = \hat{\lambda}_x, \qquad & \hat{\vartheta}_p = \hat{\lambda}_p.
\end{align}
In particular, demanding that the quantum operators are expressed as linear combinations of the classical ones such that 
\begin{align}\label{Conditions_Specification_of_F_Th1}
	& \lim_{\kappa \to 0} \hat{x}_q = \hat{x}, \qquad
		 \lim_{\kappa \to 0} \hat{p}_q = \hat{p}, \qquad
		 \lim_{\kappa \to 0} \hat{\theta}_x = \hat{\lambda}_x, \notag\\
	& \lim_{\kappa \to 0} \hat{\theta}_p = \hat{\lambda}_p, \qquad
		 \lim_{\kappa \to 0} \hat{\mathcal{H}} = \hbar \hat{L},
\end{align}
identifies the function $F$ as (see Theorems \ref{Specification_of_F_Th1} and \ref{Specification_of_F_Th2} in Sec. \ref{Sec_Qunatum_to_Calssical})
\begin{align}\label{Specification_of_F_Th1_Eqresult}
	F(p, x) = - p^2 /(2 m \kappa) - U(x)/\kappa + O(1). \quad (\kappa \to 0) 
\end{align}
Keeping the leading term in Eq. (\ref{Specification_of_F_Th1_Eqresult}), we show in Sec. \ref{Sec_Qunatum_to_Calssical} that only
 isomorphism (\ref{ClassicQuantumAlg_isomorphism}) is compatible with such a function $F$, which leads to the final expression for the unified quantum-classical Hamiltonian, 
\begin{align}\label{Hamiltonian_Hqc}
	\hat{\mathcal{H}}_{qc} &= \frac{1}{\kappa} \left[ \frac{ \hat{p}_q^2 }{2 m} + U(\hat{x}_q) \right] 
						- \frac{1}{2m\kappa} \left( \hat{p}_q - \hbar\kappa \hat{\vartheta}_x \right)^2 \notag\\
					&\qquad - \frac{1}{\kappa} U \left( \hat{x}_q + \hbar\kappa \hat{\vartheta}_p \right) \notag\\
					& \equiv \frac{\hbar}{m} \hat{p} \hat{\lambda}_x + 
						\frac{1}{\kappa} U\left( \hat{x} - \frac{\hbar\kappa}{2} \hat{\lambda}_p \right) -
						\frac{1}{\kappa} U\left( \hat{x} + \frac{\hbar\kappa}{2} \hat{\lambda}_p \right).
\end{align}  
that fulfills conditions (\ref{Conditions_Specification_of_F_Th1}). 
 Theorem \ref{Golden_Harmonic_Osc_Theorem} in Sec. \ref{Sec_Qunatum_to_Calssical} states that $\hat{\mathcal{H}}_{qc} \equiv \hbar \hat{L}$ for any value of $\kappa$ if and only if $U$ is a quadratic polynomial. 

We now demonstrate that the Wigner phase-space representation is a special case of the unified mechanics. First rewriting the equation of motion
\begin{align}\label{SchrodEq_with_Hqc}
	i \hbar \ket{ d\Psi_{\kappa}(t) / dt } = \hat{\mathcal{H}}_{qc} \ket{\Psi_{\kappa}(t)}
\end{align}
in the $x\lambda_p$-representation (for which $\hat{x} = x$, $\hat{\lambda}_x = -i \partial / \partial x$,  $\hat{p} = i \partial / \partial \lambda_p$, and $\hat{\lambda}_p = \lambda_p$), then introducing new variables $u = x - \hbar\kappa \lambda_p / 2$ and $v = x + \hbar\kappa \lambda_p / 2$, we transform Eq. (\ref{SchrodEq_with_Hqc}) into
\begin{align} 
	\left[ i \hbar\kappa \frac{\partial}{\partial t} - \frac{(\hbar\kappa)^2}{2m}\left( \frac{\partial^2}{\partial v^2} 
	- \frac{\partial^2}{\partial u^2} \right) - U(u) + U(v) \right] \rho_{\kappa} = 0, \notag
\end{align}
where $\rho_{\kappa}(u,v;t) \propto \langle x \, \lambda_p \ket{\Psi_{\kappa}(t)}$. Therefore, $\rho_{\kappa}$ is the density matrix for a quantum system with the Hamiltonian (\ref{Quantum_Hamiltonian}) after substituting $\hbar \to \hbar\kappa$. Note that $\kappa$ enters the equation of motion (\ref{SchrodEq_with_Hqc}) as only a multiplicative constant renormalizing $\hbar$. From this perspective, the limit $\kappa \to 0$ is indeed equivalent to $\hbar \to 0$. The transition from the $x\lambda_p$- to $xp$-representation results in
\begin{align}
	\langle p \, x \ket{\Psi_{\kappa}(t)} = \sqrt{\frac{\hbar\kappa}{2\pi}} \int  d\lambda_p 
		\rho_{\kappa} \left(x - \frac{\hbar\kappa \lambda_p} {2},  x + \frac{\hbar\kappa \lambda_p}{ 2}; t \right) e^{ip\lambda_p}.
\end{align}
Hence, the wave function $\langle p \, x \ket{\Psi_{\kappa}(t)}$ is proportional to the celebrated Wigner quasi-probability distribution. 

By only demanding a consistent melding of quantum and classical mechanics within ODM, we achieved the construction equivalent to the Wigner phase-space formulation of quantum mechanics. The great attraction of the Wigner formalism is due to its smooth and physically consistent quantum-to-classical and classical-to-quantum transitions \cite{Wigner1932, Heller1976, Shirokov1976b, Shirokov1979d, Kapral1999, Bolivar2004, Zachos2005, Kapral2006}. Our analysis also points to a unique feature of the phase-space formulation: no quantum mechanical representation, but Wigner's, has a ``nice'' classical limit. Moreover, since the Wigner function's dynamical equation is recast in the form of a Schr\"{o}dinger-like equation (\ref{SchrodEq_with_Hqc}), 
efficient numerical methods for solving the Schr\"{o}dinger equation may be applied to propagate the Wigner function for conceptual appeal and practical utility.

{\it Future Prospects.} 
 ODM was introduced to derive equations of motion from the evolution of average values and a chosen kinematical description. In Secs. \ref{Sec_CanonicalQuantz}-\ref{Sec_Quantum_FieldTh}, ODM is applied to the canonical quantization rule, the Schwinger quantum action principle, the time measuring problem in quantum mechanics, quantization in curvilinear coordinates, as well as classical and quantum field theories. Additionally, relativistic classical and quantum mechanics is also melded within this framework in Ref. \cite{Cabrera2011}.

Variational principles are at the heart of physics. Within their framework, the problem of model generation is reduced to finding the correct form of the action functional, whose Euler-Lagrange equations govern the model's dynamics. However, the action is usually neither directly observable nor unique; hence, its construction is a subject of debate and can only be justified {\it post factum} by supplying experimentally verifiable equations of motion. More important, there are phenomena beyond the scope of variational principles (e.g., dissipation). ODM is a theoretical framework free of all these conceptual weaknesses since it operates with observable data recast in the form of Ehrenfest-like relations. Hence, the equations of motion are no longer axioms but are corollaries of the more fundamental Ehrenfest theorems.

{\it Acknowledgments.}
D.I.B., R.C., and H.A.R. acknowledge support from NSF and ARO. Fruitful discussions with Dmitry Zhdanov are much appreciated.

\begin{widetext}

\begin{center}
{\Large\bf Supplemental Material for: ``Operational Dynamic Modeling Transcending Quantum and Classical Mechanics''} 
\end{center}

In the main text of the Letter, we introduced Operational Dynamic Modeling (ODM) as a consistent universal theoretical framework for inferring dynamical models from observable data (idealized in this work as noise free). To construct a system's model, ODM requires: i) the definition of observables' averaging, ii) the algebra of the observables, and iii) observable evolution of the average values (see Fig. \ref{Fig_GeneralEhrenfestQuantization}). The purpose of this supplemental material is to employ this technique to encompass a variety of dynamical models not covered in the main text (see the list below). Additionally, we provide a detailed derivation of the unified mechanics in Sec. \ref{Sec_Qunatum_to_Calssical} (see Fig. \ref{Fig_EhrenfestQuantization} for the roadmap of this derivation).

\begin{figure*}
	\begin{center}
		\includegraphics[scale=0.40]{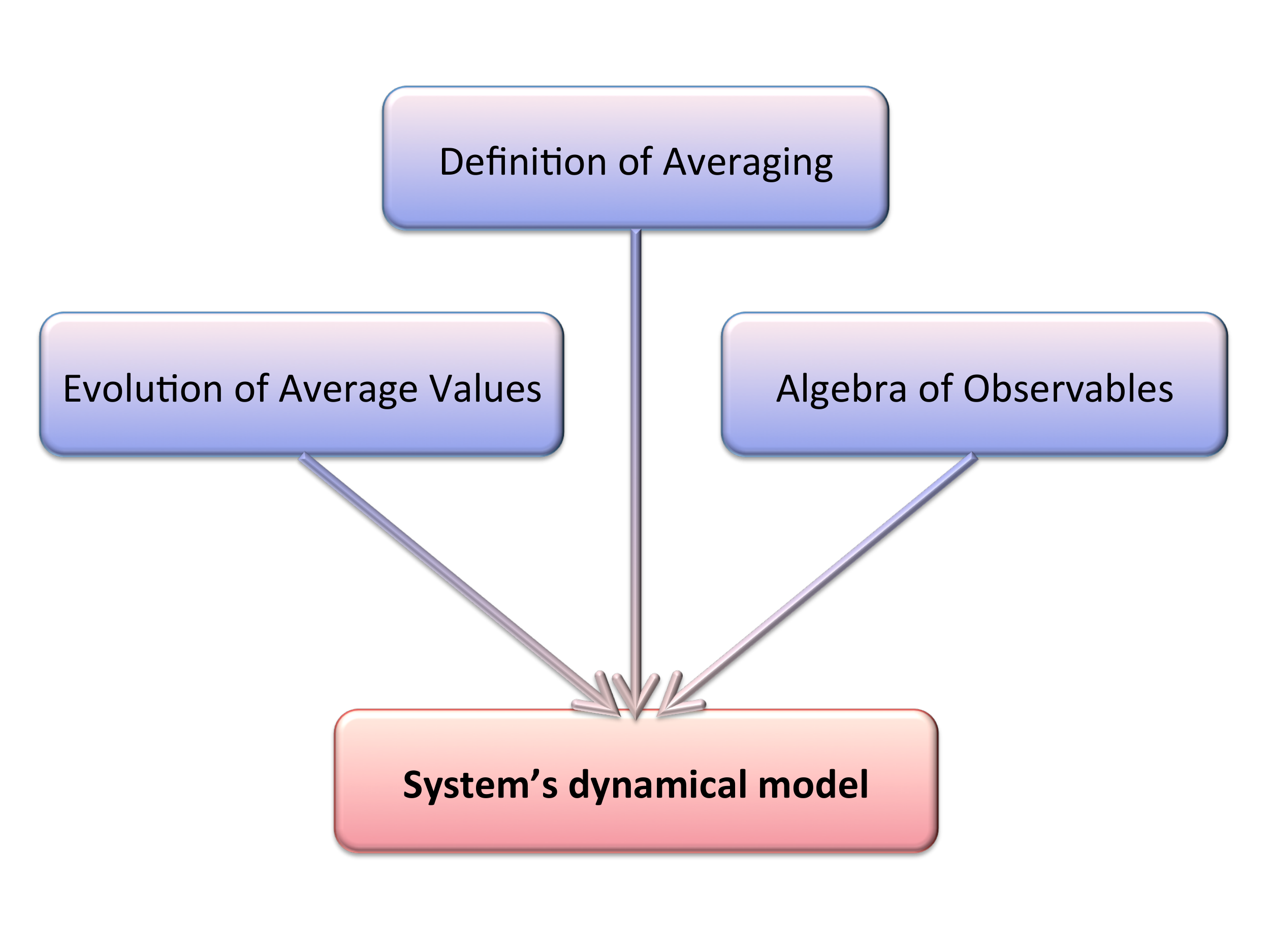}
		\caption{The general scheme of Operational Dynamic Modeling.}\label{Fig_GeneralEhrenfestQuantization}
	\end{center}
\end{figure*}
\begin{figure*}
	\begin{center}
		\includegraphics[scale=0.40]{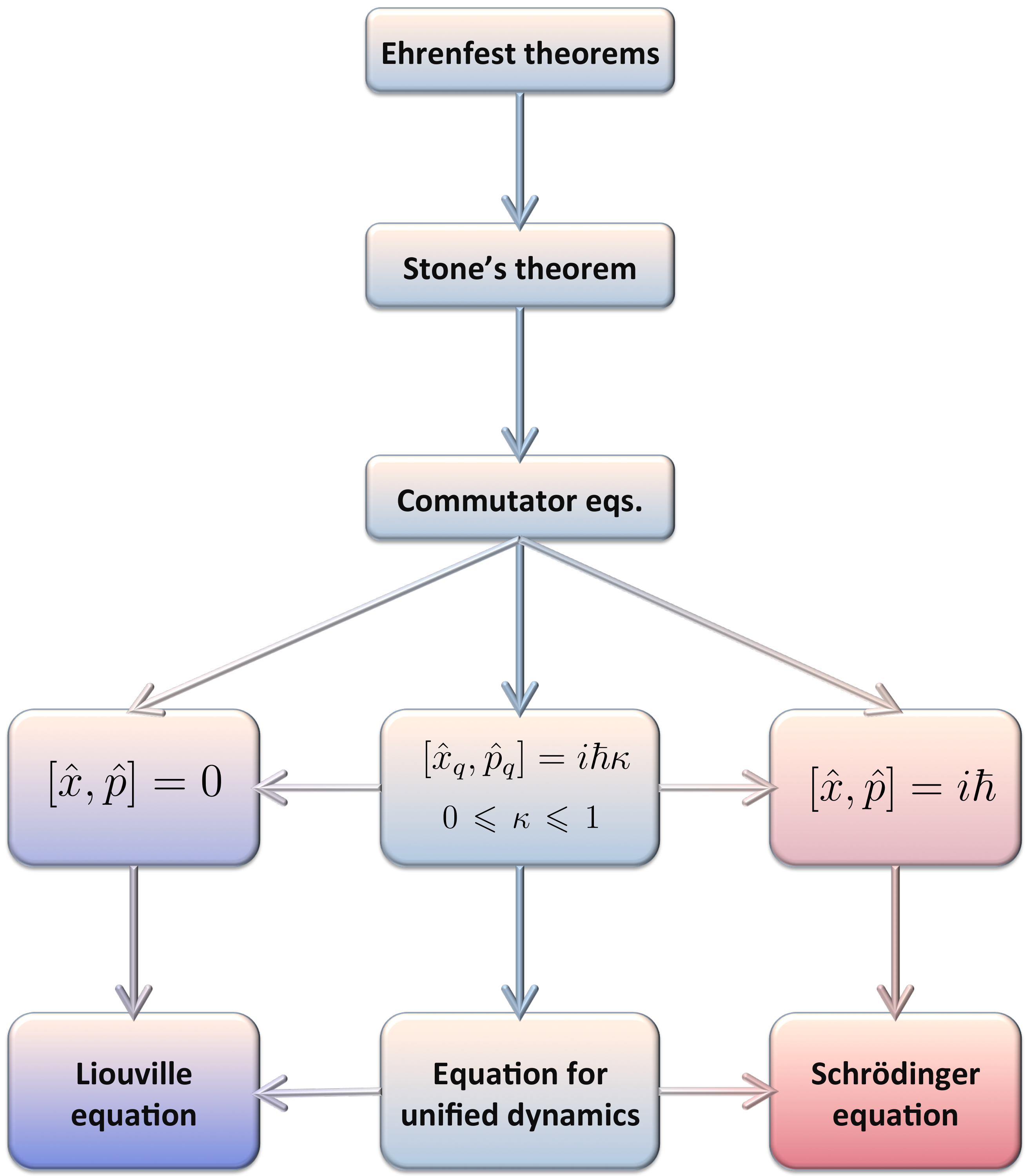}
		\caption{The derivation of quantum, classical, and unified mechanics within Operational Dynamic Modeling.}\label{Fig_EhrenfestQuantization}
	\end{center}
\end{figure*}

Before proceeding further, we wish to clarify a few points. There is a widespread belief that the Ehrenfest theorems cannot shed light on the quantum-to-classical transition. Such claims are partially due to terminology. Ehrenfest \cite{Ehrenfest1927} derived the following
\begin{align}\label{EhrenfestThs__}
	m\frac{d}{dt} \bra{\Psi(t)} \hat{x} \ket{\Psi(t)} = \bra{\Psi(t)} \hat{p} \ket{\Psi(t)}, \qquad
	\frac{d}{dt} \bra{\Psi(t)} \hat{p} \ket{\Psi(t)} = \bra{\Psi(t)} -U'(\hat{x}) \ket{\Psi(t)},
\end{align}
which we will exclusively refer to as {\it ``the Ehrenfest theorems''}. However, the same label is often applied to mean that the centroid of a narrow wave-packet follows a classical trajectory, i.e., 
\begin{align}\label{Pseudo_EhrenfestThs}
	m\frac{d}{dt} \bra{\Psi(t)} \hat{x} \ket{\Psi(t)}  = \bra{\Psi(t)} \hat{p} \ket{\Psi(t)}, \qquad
	\frac{d}{dt} \bra{\Psi(t)} \hat{p} \ket{\Psi(t)}  \approx  -U' \left( \bra{\Psi(t)} \hat{x} \ket{\Psi(t)} \right).
\end{align}
While Eq. (\ref{EhrenfestThs__}) is a rigorous mathematical identity \cite{Friesecke2009, Friesecke2010}, Eq. (\ref{Pseudo_EhrenfestThs}) is an assertion based on a physical approximation \cite{Ballentine1994, Ballentine2004}.

\tableofcontents

\section{Stone's Theorem}\label{Sec_Stones_Th}

Stone's theorem \cite{Reed1980, Araujo2008} can be stated as follows: If $\hat{U}(t)$ is a strongly continuous unitary group (e.g., which describes the evolution of a system), then there is a unique self-adjoint operator $\hat{H}$ (the dynamic generator, e.g., Hamiltonian or Liouvillian) such that
\begin{align}
	i \frac{d}{d t}\hat{U}(t) \ket{ f }= \hat{H}\hat{U}(t) \ket{ f }, 
\end{align}
for all $\ket{f}$ from the domain of the operator $\hat{H}$. The latter equation can also be formally expressed as $\hat{U}(t) = \exp(-i\hat{H}t)$. 

If the time-independent generator of motion $\hat{H}$ is a self-adjoint operator, then Stone's theorem  guarantees not only the existence of unique solutions of the time-independent Schr\"{o}dinger and Liouville equations, but also the conservation of the wave function norms. Regarding the generalization of Stone's theorem to the case of a time-dependent Hamiltonian see, e.g., Sec. X.12 of Ref. \cite{Reed1975}.

Physically, Stone's theorem is equivalent to assuming that observables smoothly depend on time.

\section{Noncommutative Analysis: The Weyl Calculus}\label{Sec_Weyl_calculus}

Noncommutative analysis \cite{Weyl1950, Feynman1951, Nelson1970, Taylor1973, Maslov1976, Karasev1979, Nazaikinskii1992, Nazaikinskii1996, Madore2000, Andersson2004, Jefferies2004, Rosas2004, Nielsen2005, Muller2007} is a broad and active field of mathematics with a number of important applications. This branch of analysis aims at identifying functions of noncommutative variables and specifying operations with such objects. There are many ways of introducing functions of operators; however, the choice of a particular definition is a matter of convenience \cite{Nazaikinskii1992}. 

To make the paper self-consistent, we shall review basic results from the Weyl calculus, which is a popular version of noncommuting analysis. Theorem \ref{Th_Weyl_commutator_theorem} plays a crucial role in the current paper. Even though we prove this result within the Weyl calculus, it is valid in more general settings (see, e.g., Ref. \cite{Maslov1976} and page 63 of Ref. \cite{Nazaikinskii1996}).

The starting point is the well known fact that Fourier transforming back and forth  does not change a sufficiently smooth function of $n$-arguments,
\begin{align}\label{Fourier_identity}
	f(\lambda_1, \ldots, \lambda_n) = \frac{1}{(2\pi)^n} \int \prod_{l=1}^n d\xi_l d\eta_l 
		\exp\left[ i \sum_{q=1}^n \eta_q(\lambda_q - \xi_q) \right] f(\xi_1, \ldots, \xi_n).
\end{align}
Following this observation, we define the function of noncommuting operators within the Weyl calculus as 
\begin{align}\label{Weyl_main_definition}
	f(\hat{A}_1, \ldots, \hat{A}_n) \coloneqq \frac{1}{(2\pi)^n} \int \prod_{l=1}^n d\xi_l d\eta_l
		\exp\left[ i \sum_{q=1}^n \eta_q(\hat{A}_q - \xi_q) \right] f(\xi_1, \ldots, \xi_n),
\end{align}
where the exponential of an operator is specified by the Taylor expansion,
\begin{align}\label{OperatorExp}
	\exp(\hat{A}) \coloneqq \sum_{k=0}^{\infty} \frac{\hat{A}^k}{ k! }.
\end{align}
The identity
\begin{align}\nonumber
	f^{\dagger} (\hat{A}_1, \ldots, \hat{A}_n) = f(\hat{A}_1^{\dagger}, \ldots, \hat{A}_n^{\dagger})
\end{align}
implies that the function of self-adjoint operators (\ref{Weyl_main_definition}) is itself a self-adjoint operator. Moreover, one may demonstrate that  
\begin{align}\label{Weyl_derivative_of_function}
	f'_{\hat{A}_k} (\hat{A}_1, \ldots, \hat{A}_n) & \coloneqq \lim_{\epsilon\to 0} \frac 1{\epsilon} 
		\left[ f(\hat{A}_1, \ldots, \hat{A}_k + \epsilon, \ldots, \hat{A}_n)  - f(\hat{A}_1, \ldots, \hat{A}_k, \ldots, \hat{A}_n)\right]\nonumber\\
		&= \frac{1}{(2\pi)^n} \int \prod_{l=1}^n d\xi_l d\eta_l \, i\eta_k 
			\exp\left[ i \sum_{q=1}^n \eta_q(\hat{A}_q - \xi_q) \right] f(\xi_1, \ldots, \xi_n) \nonumber\\
		&= \frac{1}{(2\pi)^n} \int \prod_{l=1}^n d\xi_l d\eta_l
			f(\xi_1, \ldots, \xi_n) \left( -\frac{\partial}{\partial \xi_k} \right) \exp\left[ i \sum_{q=1}^n \eta_q(\hat{A}_q - \xi_q) \right]  \nonumber\\
		&= \frac{1}{(2\pi)^n} \int \prod_{l=1}^n d\xi_l d\eta_l
			\exp\left[ i \sum_{q=1}^n \eta_q(\hat{A}_q - \xi_q) \right] f'_{\xi_k} (\xi_1, \ldots, \xi_n).
\end{align}
Equation (\ref{Weyl_main_definition}) defines a one-to-one mapping between a function $f(\xi_1, \ldots, \xi_n)$ and a linear operator $f(\hat{A}_1, \ldots, \hat{A}_n)$. By the same token, Eq. (\ref{Weyl_derivative_of_function}) establishes a one-to-one mapping between the derivative of a function and the derivative of a linear operator.

The following theorem is of fundamental importance:
\begin{theorem}\label{Th_Weyl_commutator_theorem}
	Let $\hat{A}_1, \ldots, \hat{A}_n$ be some operators and $\hat{C}_k = \commut{\hat{A}_k, \hat{B}}$, $k=1,\ldots,n$. If 
	$\commut{ \hat{A}_k, \hat{C}_l } = \commut{  \hat{B}, \hat{C}_k } = 0$, $k,l=1,\ldots,n$, then
	\begin{align}\label{Weyl_MainTh_equality}
		\commut{ f(\hat{A}_1,\ldots, \hat{A}_n), \hat{B} } = \sum_{k=1}^n \commut{\hat{A}_k, \hat{B}} f_{\hat{A}_k}' (\hat{A}_1,\ldots, \hat{A}_n),	
	\end{align}
	where $f(\hat{A}_1,\ldots, \hat{A}_n)$ is defined by means of Eq. (\ref{Weyl_main_definition}).
\end{theorem}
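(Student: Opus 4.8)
The plan is to reduce the operator commutator to an essentially c-number computation by working entirely inside the Weyl integral representation (\ref{Weyl_main_definition}). Since $\hat B$ is a fixed operator and $[\,\cdot\,,\hat B]$ is linear, I would first bring the commutator inside the integral, so that the whole problem collapses to evaluating the single commutator $[\exp(i\sum_q \eta_q(\hat A_q-\xi_q)),\hat B]$ for fixed real parameters $\eta_q,\xi_q$. The scalar phase $\exp(-i\sum_q \eta_q \xi_q)$ commutes with everything and factors out harmlessly, so it suffices to study $\hat E \coloneqq \exp(\hat X)$ with $\hat X \coloneqq i\sum_q \eta_q \hat A_q$.

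The key lemma is the identity $[\hat E,\hat B]=[\hat X,\hat B]\,\hat E$. To establish it I would set $g(s)\coloneqq e^{s\hat X}\hat B e^{-s\hat X}$ and differentiate, obtaining $g'(s)=e^{s\hat X}[\hat X,\hat B]e^{-s\hat X}$. The crucial hypothesis now enters: because $[\hat X,\hat B]=i\sum_q \eta_q[\hat A_q,\hat B]=i\sum_q \eta_q \hat C_q$ and the assumption $[\hat A_k,\hat C_l]=0$ forces $[\hat X,\hat C_q]=0$, the double commutator $[\hat X,[\hat X,\hat B]]$ vanishes. Hence $g'(s)=[\hat X,\hat B]$ is constant in $s$, giving $g(1)=\hat B+[\hat X,\hat B]$, which rearranges (using that $[\hat X,\hat B]$ commutes with $\hat E$) to the claimed identity. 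Equivalently, this is the statement that the Hadamard series $e^{\hat X}\hat B e^{-\hat X}=\sum_k \frac{1}{k!}\,\mathrm{ad}_{\hat X}^{\,k}\hat B$ truncates after the linear term.

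With the lemma in hand, I would substitute $[\hat X,\hat B]=i\sum_k \eta_k \hat C_k$ to obtain $[\hat E,\hat B]=\bigl(i\sum_k \eta_k \hat C_k\bigr)\hat E$. Since each $\hat C_k=[\hat A_k,\hat B]$ commutes with every $\hat A_j$ and hence with $\hat E$ and with the entire integrand, I can pull the operator coefficient $\hat C_k$ out in front of the integral. What then remains multiplying $i\eta_k$ is exactly the integrand whose $\xi,\eta$-integral was identified in Eq. (\ref{Weyl_derivative_of_function}) as the Weyl derivative $f'_{\hat A_k}$. Collecting terms yields $[f(\hat A_1,\ldots,\hat A_n),\hat B]=\sum_k [\hat A_k,\hat B]\,f'_{\hat A_k}(\hat A_1,\ldots,\hat A_n)$, as desired.

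The main obstacle is the rigorous justification of this central lemma and the manipulations around it: showing that the Hadamard series genuinely truncates under the stated hypotheses, and justifying the interchange of the commutator with the operator-valued integral together with the factoring-out of $\hat C_k$. The role of the second hypothesis $[\hat B,\hat C_k]=0$ is to ensure that the $\hat C_k$ are truly central, so that the ordering in which $[\hat A_k,\hat B]$ appears in the final formula is unambiguous; the formal manipulations themselves lean principally on $[\hat A_k,\hat C_l]=0$.
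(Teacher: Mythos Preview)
Your proposal is correct and follows essentially the same route as the paper: pass the commutator through the Weyl integral, reduce to the single identity $[\exp(\hat X),\hat B]=[\hat X,\hat B]\exp(\hat X)$ under the centrality hypotheses, and then recognize the factor $i\eta_k$ as producing the Weyl derivative via Eq.~(\ref{Weyl_derivative_of_function}). The only cosmetic difference is in how that exponential identity is obtained: the paper expands $\exp(\hat A)$ term by term and uses the Leibniz-type product rule to get $[\hat A^{k},\hat B]=k\hat C\hat A^{k-1}$, whereas you invoke the Hadamard construction $g(s)=e^{s\hat X}\hat B e^{-s\hat X}$ and observe that $g'(s)$ is constant because $[\hat X,[\hat X,\hat B]]=0$; these are interchangeable standard arguments for the same lemma.
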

\begin{proof}
	We introduce $\hat{A} \coloneqq i \sum_{q=1}^n \eta_q(\hat{A}_q - \xi_q)$ and $\hat{C} \coloneqq \commut{\hat{A}, \hat{B}} = i\sum_{q=1}^n \eta_q \hat{C}_q$;
	hence, $\commut{ \hat{A}, \hat{C} } = \commut{ \hat{B}, \hat{C} } = 0$. From the following identity:
	\begin{align}
		\commut{ \hat{A}_1 \cdots \hat{A}_n, \hat{B} }  &= 
			\sum_{k=1}^n \hat{A}_1 \cdots \hat{A}_{k-1} \commut{ \hat{A}_k, \hat{B} } \hat{A}_{k+1} \cdots \hat{A}_n,   
	\end{align}
	we obtain
	$
		\commut{ \hat{A}^k, \hat{B} } = k \hat{C} \hat{A}^{k-1}. 
	$
	It follows from Eq. (\ref{OperatorExp}) that
	\begin{align}
		\commut{ \exp(\hat{A}), \hat{B} } = \hat{C} \exp(\hat{A}) = \sum_{q=1}^n \hat{C}_q \frac{\partial}{\partial \hat{A}_q} \exp(\hat{A}).
	\end{align}
	Having substituted this equality into Eq. (\ref{Weyl_main_definition}), we finally reach Eq. (\ref{Weyl_MainTh_equality}).
\end{proof}
In the context of Maslov calculus \cite{Maslov1976, Nazaikinskii1992, Nazaikinskii1996}, theorem \ref{Th_Weyl_commutator_theorem} has been extended to a more general case where $\hat{C}_k$ need not commute with $\hat{A}_n$ and $\hat{B}$. In Ref. \cite{Transtrum2005},  commutators of the type $\commut{ f(\hat{A}_1,\ldots, \hat{A}_n), g(\hat{B}_1,\ldots, \hat{B}_n) }$ were considered.

\section{Unification of Classical and Quantum Mechanics}\label{Sec_Qunatum_to_Calssical}

We reiterate that the key difference between classical and quantum mechanics is that the operators of momentum and coordinate commute in the former case and do not commute in the latter case \cite{Dirac1958, Shirokov1976b, Shirokov1979d}. The operators $\hat{x}$, $\hat{p}$, $\hat{\lambda}_x$, and $\hat{\lambda}_p$ obeying the commutation relations 
\begin{align}\label{Complete_classical_algebra}
	\commut{ \hat{x}, \hat{\lambda}_x } = \commut{ \hat{p}, \hat{\lambda}_p } = i, \qquad
	\commut{ \hat{x}, \hat{p} } = \commut{ \hat{x}, \hat{\lambda}_p } = \commut{ \hat{p}, \hat{\lambda}_x } 
	= \commut{ \hat{\lambda}_x, \hat{\lambda}_p } = 0,
\end{align}
form {\it the classical algebra of operators}. Let us introduce another auxiliary algebra: The operators  $\hat{x}_q$, $\hat{p}_q$, $\hat{\vartheta}_x$, and $\hat{\vartheta}_p$ form {\it the quantum algebra of operators} satisfying
\begin{align}
	\commut{ \hat{x}_q, \hat{p}_q } = i\hbar\kappa, \quad (0 \leqslant \kappa \leqslant 1) \qquad
	\commut{ \hat{x}_q, \hat{\vartheta}_x } = \commut{ \hat{p}_q, \hat{\vartheta}_p } = i,
\end{align}
with all the other commutators vanishing. The operators $\hat{\vartheta}_x$ and $\hat{\vartheta}_p$ are introduced into the quantum algebra so that it resembles the classical algebra.  

The values of $\kappa$ in the domain $0 \leqslant \kappa \leqslant 1$ defines the quantum-to-classical character because the quantum algebra smoothly transforms into the classical one as $\kappa \to 0$. Since $\hbar$ enters in the canonical commutator relationship for the quantum coordinate and momentum as well as the time derivative in Sch\"{o}dinger equation, the limit $\hbar\to 0$ encompasses more than the criterion that the coordinate and momentum operators must commute in the classical limit. This situation motivated the introduction of the additional parameter $\kappa$.

To understand the transition from quantum to classical mechanics (see Fig. \ref{Fig_EhrenfestQuantization} for the roadmap of the current section), we first apply ODM to 
\begin{align}\label{EhrenfestThrs_in_quantumclass_case}
	m\frac{d}{dt} \bra{\Psi(t)} \hat{x}_q \ket{\Psi(t)} = \bra{\Psi(t)} \hat{p}_q \ket{\Psi(t)}, \qquad
	\frac{d}{dt} \bra{\Psi(t)} \hat{p}_q \ket{\Psi(t)} = \bra{\Psi(t)} -U'(\hat{x}_q) \ket{\Psi(t)}
\end{align}
and find the Hamiltonian $\hat{\mathcal{H}} = \mathcal{H}(\hat{x}_q, \hat{p}_q, \hat{\vartheta}_x, \hat{\vartheta}_p)$ such that $i\hbar \ket{d \Psi(t)/dt} = \hat{\mathcal{H}} \ket{\Psi(t)}$. Using theorem \ref{Th_Weyl_commutator_theorem}, we derive the system of  partial differential equations for the function $\mathcal{H}$,
\begin{align}
	\kappa m \frac{\partial \mathcal{H}}{\partial p_q} + \frac{m}{\hbar} \frac{\partial \mathcal{H}}{\partial \vartheta_x} = p_q, \qquad
	\kappa \frac{\partial \mathcal{H}}{\partial x_q} - \frac{1}{\hbar} \frac{\partial \mathcal{H}}{\partial \vartheta_p} = U'(x_q),
\end{align}
whose general solution reads
\begin{align}\label{MixedQuantumClassical_Hamiltonian}
	\hat{\mathcal{H}} = \frac{1}{\kappa} \left[ \frac{ \hat{p}_q^2 }{2 m} + U(\hat{x}_q) \right] 
					+ F \left( \hat{p}_q - \hbar\kappa \hat{\vartheta}_x, \hat{x}_q + \hbar\kappa \hat{\vartheta}_p \right),
\end{align}
where $F$ is an arbitrary differentiable function of two variables. 

Expression (\ref{MixedQuantumClassical_Hamiltonian}) was inferred only from the Ehrenfest theorems (\ref{EhrenfestThrs_in_quantumclass_case}) for the coordinate and momentum. We seek to show that $F$ remains free even if the Ehrenfest theorem is known for another observable $\hat{O} = O\left( \hat{x}_q, \hat{p}_q \right)$. The Ehrenfest theorem for $\hat{O}$ reads
\begin{align}
	i\hbar \frac{d}{dt} \bra{\Psi(t)} \hat{O}  \ket{\Psi(t)} 
	= \bra{\Psi(t)} \commut{\hat{O}, \hat{\mathcal{H}}} \ket{\Psi(t)}.
\end{align}
The contribution from $F$ to this Ehrenfest theorem would be measurable if $\hat{O}$ did not commute with  $\hat{F}$.
However, 
\begin{align}
	\commut{ \hat{p}_q, \hat{p}_q - \hbar\kappa \hat{\vartheta}_x } = \commut{ \hat{p}_q, \hat{x}_q + \hbar\kappa \hat{\vartheta}_p }
	= \commut{ \hat{x}_q, \hat{p}_q - \hbar\kappa \hat{\vartheta}_x } = \commut{ \hat{x}_q, \hat{x}_q + \hbar\kappa \hat{\vartheta}_p } 
	= 0 \Longrightarrow \commut{ \hat{O}, \hat{F} } = 0; \notag
\end{align}
thus, the function $F$ is truly undetectable.

We first set $F$ to zero and consider the Hamiltonian
\begin{align}\label{Hq_Hamlitonian}
	\hat{H}_q = \frac{1}{\kappa} \left[ \frac{ \hat{p}_q^2 }{2 m} + U(\hat{x}_q) \right],
\end{align}
to deduce whether the classical Liouville equation 
\begin{align}
	i \ket{d\Psi(t)/dt} = \hat{L} \ket{\Psi(t)}, \qquad
	\hat{L} = \hat{p} \hat{\lambda}_x / m - U'(\hat{x}) \hat{\lambda}_p + f(\hat{x}, \hat{p}), \qquad (\mbox{$f$ is an arbitrary function})
	\label{General_form_of_Liouvillian}
\end{align}
can be recovered from the Schr\"{o}dinger equation 
\begin{align}\label{SchEq_transition_from_quant_to_class}
	i\hbar \ket{d \Psi(t)/dt} = \hat{H}_q \ket{\Psi(t)}
\end{align}
as $\kappa\to 0$.

The classical and quantum algebras are the same, i.e., they are isomorphic. Indeed, the quantum operators ($\hat{x}_q$, $\hat{p}_q$, $\hat{\vartheta}_x$, and $\hat{\vartheta}_p$) can be constructed as linear combinations of the classical operators ($\hat{x}$, $\hat{p}$, $\hat{\lambda}_x$, and $\hat{\lambda}_p$) in infinitely many ways. Three examples of such realizations are: i) $\hat{x}_q = \hat{x}$, $\hat{p}_q = \hbar\kappa \hat{\lambda}_x + (\hbar\kappa -1)\hat{\lambda}_p$, $\hat{\vartheta}_x = \hat{\lambda}_x + \hat{\lambda}_p$, $\hat{\vartheta}_p = \hat{p} - \hat{x}$; ii) $\hat{x}_q = \hat{x} -\hbar\kappa \hat{\lambda}_p$, $\hat{p}_q = \hat{p}$, $\hat{\vartheta}_x = \hat{\lambda}_x$, $\hat{\vartheta}_p = \hat{\lambda}_p$; 
\begin{align}\label{ClassicQuantumAlg_isomorphism}
	\mbox{iii)} \quad \hat{x}_q = \hat{x} - \hbar\kappa \hat{\lambda}_p / 2, \quad \hat{p}_q = \hat{p} + \hbar\kappa \hat{\lambda}_x / 2, 
	\quad \hat{\vartheta}_x = \hat{\lambda}_x, \quad \hat{\vartheta}_p = \hat{\lambda}_p.
\end{align}
The linear isomorphism between the two algebras stimulates the question: Can $\hat{x}_q$ and $\hat{p}_q$ be expressed as linear combinations of the classical operators such that $\lim_{\kappa \to 0} \hat{x}_q = \hat{x}$, $\lim_{\kappa \to 0} \hat{p}_q = \hat{p}$, and $\lim_{\kappa \to 0} \hat{H}_q = \hbar \hat{L}$? Theorem \ref{QuantToClass_nogo_theorem_strong_version} negatively answers this question. First, we shall prove a more general statement:
\begin{lemma}\label{QuantToClass_nogo_lemma}
	Assume $U'(x) \not\equiv 0$. If there exist operators $\hat{x}_q$ and $\hat{p}_q$ such that they are linear combinations of $\hat{x}$, $\hat{p}$, $\hat{\lambda}_x$,  $\hat{\lambda}_p$, and 
	\begin{align}
		& \lim_{\kappa \to 0} \hat{x}_q = \hat{x}, \label{lemma_for_nogotheorem_eq1} \\ 
		& \lim_{\kappa \to 0} \hat{p}_q = \alpha \hat{p}, \label{lemma_for_nogotheorem_eq2} \\
		& \lim_{\kappa \to 0} \hat{H}_q = \beta\hbar \hat{L} \label{lemma_for_nogotheorem_eq3},
	\end{align}
	then $\commut{ \hat{x}_q, \hat{p}_q } = i \beta (\alpha + 1/\alpha) \hbar\kappa + o(\kappa)$.
\end{lemma}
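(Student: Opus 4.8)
The plan is to exploit the fact that $\hat{x}_q$ and $\hat{p}_q$ are \emph{linear} in the classical generators, so their whole $\kappa$-dependence sits in the coefficients. Assuming these depend smoothly on $\kappa$, I would write, to first order, $\hat{x}_q = \hat{x} + \kappa\hat{X} + o(\kappa)$ and $\hat{p}_q = \alpha\hat{p} + \kappa\hat{P} + o(\kappa)$, where conditions (\ref{lemma_for_nogotheorem_eq1}) and (\ref{lemma_for_nogotheorem_eq2}) have already fixed the zeroth-order pieces, and $\hat{X} = \tilde{a}_1\hat{x} + \tilde{a}_2\hat{p} + \tilde{a}_3\hat{\lambda}_x + \tilde{a}_4\hat{\lambda}_p$, $\hat{P} = \tilde{b}_1\hat{x} + \tilde{b}_2\hat{p} + \tilde{b}_3\hat{\lambda}_x + \tilde{b}_4\hat{\lambda}_p$ gather the first-order coefficients. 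Evaluating $\commut{\hat{x}_q,\hat{p}_q}$ with the classical algebra (\ref{Complete_classical_algebra}) alone, every cross term is either $O(\kappa^2)$ or built from a vanishing classical commutator, leaving $\commut{\hat{x}_q,\hat{p}_q} = i\kappa(\tilde{b}_3 - \alpha\tilde{a}_4) + o(\kappa)$. Thus the entire task collapses to determining just two numbers: $\tilde{b}_3$, the $\hat{\lambda}_x$-content of $\hat{P}$, and $\tilde{a}_4$, the $\hat{\lambda}_p$-content of $\hat{X}$.

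To fix these I would feed the expansions into $\hat{H}_q = \kappa^{-1}[\hat{p}_q^2/(2m) + U(\hat{x}_q)]$ and impose condition (\ref{lemma_for_nogotheorem_eq3}). Because of the explicit prefactor $\kappa^{-1}$, the finite part of $\hat{H}_q$ is the $O(\kappa)$ part of the bracket, and it is this part that must equal $\beta\hbar\hat{L}$. Expanding, $\hat{p}_q^2/(2m)$ supplies $(\alpha/m)\hat{p}\hat{P}$ at order $\kappa$, while the symmetric (Weyl) expansion of $U(\hat{x}_q)$ about $\hat{x}$ supplies $U'(\hat{x})\hat{X}$ together with remainder terms proportional to $U''(\hat{x})$ and a pure constant; crucially, all of these remainders are functions of $\hat{x}$ alone and carry no $\hat{\lambda}_x$ or $\hat{\lambda}_p$, so they are invisible to the two matchings I need.

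The decisive step is then to compare the finite part of $\hat{H}_q$ with $\beta\hbar\hat{L} = (\beta\hbar/m)\hat{p}\hat{\lambda}_x - \beta\hbar\,U'(\hat{x})\hat{\lambda}_p + \beta\hbar f(\hat{x},\hat{p})$, basis element by basis element (cf.\ (\ref{General_form_of_Liouvillian})). The coefficient of $\hat{p}\hat{\lambda}_x$ arises solely from $(\alpha/m)\hat{p}\hat{P}$, giving $\alpha\tilde{b}_3/m = \beta\hbar/m$, hence $\tilde{b}_3 = \beta\hbar/\alpha$. The coefficient of $U'(\hat{x})\hat{\lambda}_p$ arises solely from $U'(\hat{x})\hat{X}$, and here the standing assumption $U'(x)\not\equiv 0$ is indispensable: it guarantees that $U'(\hat{x})\hat{\lambda}_p$ is a genuine, linearly independent operator whose coefficient may be read off, yielding $\tilde{a}_4 = -\beta\hbar$. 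Substituting into $\tilde{b}_3 - \alpha\tilde{a}_4$ gives $\beta\hbar/\alpha + \alpha\beta\hbar = \beta(\alpha + 1/\alpha)\hbar$, which is precisely the asserted leading coefficient of $\commut{\hat{x}_q,\hat{p}_q}/(i\kappa)$.

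The main obstacle I anticipate is the first-order operator expansion of $U(\hat{x}_q)$: since $\hat{X}$ contains $\hat{\lambda}_x$, which fails to commute with $\hat{x}$, one cannot naively set $U(\hat{x}_q)\approx U(\hat{x}) + \kappa U'(\hat{x})\hat{X}$, and the correct symmetric expansion must be justified, e.g.\ via Theorem \ref{Th_Weyl_commutator_theorem} or a term-by-term argument on $\hat{x}^n$, with a verification that the $U''(\hat{x})$-corrections are $\hat{\lambda}$-free. I would also flag that the seemingly shorter route of letting $\kappa\to 0$ directly inside the exact identities $\commut{\hat{H}_q,\hat{x}_q} = -i\gamma\hat{p}_q/(m\kappa)$ and $\commut{\hat{H}_q,\hat{p}_q} = i\gamma\,U'(\hat{x}_q)/\kappa$ is misleading, because the $\kappa^{-1}$ part of $\hat{H}_q$ does \emph{not} drop out of these commutators; overlooking its finite residual contribution would spuriously force $\alpha^2=1$ rather than the correct additive structure $\alpha + 1/\alpha$. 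This is exactly why the direct coefficient-matching route above is the safe one.
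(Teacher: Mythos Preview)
Your approach is correct and follows essentially the same logic as the paper's proof: both extract the $\hat{\lambda}_x$- and $\hat{\lambda}_p$-content of $\hat{H}_q$ and match it against $\beta\hbar\hat{L}$, then compute the commutator from the resulting leading-order coefficients. The paper, however, packages this extraction more cleanly: instead of expanding $U(\hat{x}_q)$ and then reading off basis coefficients, it applies $\partial/\partial\hat{\lambda}_x$ and $\partial/\partial\hat{\lambda}_p$ (equivalently, commutation with $\hat{x}$ and $\hat{p}$ via Theorem~\ref{Th_Weyl_commutator_theorem}) directly to the hypothesis $\lim_{\kappa\to 0}\hat{H}_q=\beta\hbar\hat{L}$, obtaining for instance
\[
\beta\hbar\,\frac{\hat{p}}{m}=\lim_{\kappa\to 0}\frac{1}{\kappa}\Bigl(\frac{\alpha\hat{p}}{m}\,\frac{\partial p_q}{\partial\lambda_x}+U'(\hat{x})\,\frac{\partial x_q}{\partial\lambda_x}\Bigr),
\]
from which $\partial p_q/\partial\lambda_x=\beta\hbar\kappa/\alpha+o(\kappa)$ and $\partial x_q/\partial\lambda_x=o(\kappa)$ drop out immediately. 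This sidesteps exactly the obstacle you flag (the noncommutativity of $\hat{X}$ with $\hat{x}$ in the expansion of $U(\hat{x}_q)$), automatically annihilates both the $\kappa^{-1}$ singular piece and the free $f(\hat{x},\hat{p})$ term in $\hat{L}$, and does not require your preliminary smoothness ansatz $\hat{x}_q=\hat{x}+\kappa\hat{X}+o(\kappa)$: the needed $o(\kappa)$ asymptotics of the four $\lambda$-coefficients are \emph{derived} rather than assumed. Your route reaches the same destination, but the paper's derivative trick is the shortcut you were looking for.
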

\begin{proof}
	The asymptotic symbols $o(\kappa)$ and $o(1)$ are defined with respect to the limit $\kappa \to 0$.
	 By the condition of  the lemma, we set 
	\begin{align}\label{LinearExp_xq_pq}
		\hat{x}_q = \frac{\partial x_q}{\partial x} \hat{x} + \frac{\partial x_q}{\partial \lambda_x} \hat{\lambda}_x 
				+ \frac{\partial x_q}{\partial p} \hat{p} + \frac{\partial x_q}{\partial \lambda_p} \hat{\lambda}_p, \qquad
		\hat{p}_q = \frac{\partial p_q}{\partial x} \hat{x} + \frac{\partial p_q}{\partial \lambda_x} \hat{\lambda}_x 
				+ \frac{\partial p_q}{\partial p} \hat{p} + \frac{\partial p_q}{\partial \lambda_p} \hat{\lambda}_p.
	\end{align}
	From Eqs. (\ref{lemma_for_nogotheorem_eq1})-(\ref{lemma_for_nogotheorem_eq3}), 
	\begin{align}
		\beta\hbar \frac{\partial \hat{L}}{\partial \hat{\lambda}_x} = \lim_{\kappa \to 0} \frac{\partial \hat{H}}{\partial \hat{\lambda}_x}
		\Longrightarrow
		\beta\hbar \frac{\hat{p}}{m} = \lim_{\kappa \to 0} \left( \frac{ \hat{p}_q }{\kappa m} \frac{\partial p_q}{\partial \lambda_x}
				+ \frac{1}{\kappa} U'(\hat{x}_q) \frac{\partial x_q}{\partial \lambda_x} \right) 
			= \lim_{\kappa \to 0} \left( \frac{ \alpha \hat{p}}{\kappa m} \frac{\partial p_q}{\partial \lambda_x} 
				+ \frac{1}{\kappa} U'(\hat{x}) \frac{\partial x_q}{\partial \lambda_x} \right);
	\end{align}
	whence, we conclude that
	\begin{align}\label{pq_lambdax_and_xq_lambdax}
		\frac{\partial p_q}{\partial \lambda_x} =  \frac{\beta \hbar}{\alpha} \kappa + o(\kappa), \qquad 
		\frac{\partial x_q}{\partial \lambda_x} = o(\kappa).
	\end{align}
	By the same token, we derive from Eqs. (\ref{lemma_for_nogotheorem_eq1})-(\ref{lemma_for_nogotheorem_eq3}) that
	\begin{align}
		& \beta\hbar \frac{\partial \hat{L}}{\partial \hat{\lambda}_p} = \lim_{\kappa \to 0} \frac{\partial \hat{H}}{\partial \hat{\lambda}_p}
		\Longrightarrow
		- \beta\hbar U'(\hat{x}) = \lim_{\kappa \to 0} \left( \frac{ \alpha \hat{p}}{\kappa m} \frac{\partial p_q}{\partial \lambda_p} 
				+ \frac{1}{\kappa} U'(\hat{x}) \frac{\partial x_q}{\partial \lambda_p} \right)
		\Longrightarrow \notag\\
		& \frac{\partial x_q}{\partial \lambda_p} = -\beta\hbar \kappa + o(\kappa), \quad
		 \frac{\partial p_q}{\partial \lambda_p} = o(\kappa).
	\end{align}
	Equations (\ref{lemma_for_nogotheorem_eq1}) and (\ref{lemma_for_nogotheorem_eq2}) imply 
	\begin{align}\label{xq_x_and_pq_p}
		\frac{\partial x_q}{\partial x} = 1 + o(1), \qquad \frac{\partial x_q}{\partial p} = o(1), \qquad
		\frac{\partial p_q}{\partial p} = \alpha + o(1), \qquad \frac{\partial p_q}{\partial x} = o(1).
	\end{align}
	Substituting Eqs. (\ref{pq_lambdax_and_xq_lambdax})-(\ref{xq_x_and_pq_p}) into Eq. (\ref{LinearExp_xq_pq}) and calculating the commutator between $\hat{x}_q$ and $\hat{p}_q$ by using Eq. (\ref{Complete_classical_algebra}), we finalize the lemma's proof.
\end{proof}

\begin{theorem}[The strong version of the no-go theorem]\label{QuantToClass_nogo_theorem_strong_version}
	Assume $U'(x) \not\equiv 0$. There are no operators  $\hat{x}_q$ and $\hat{p}_q$ such that they are linear combinations of $\hat{x}$, $\hat{p}$, $\hat{\lambda}_x$,  $\hat{\lambda}_p$, and $\lim_{\kappa \to 0} \hat{x}_q = \hat{x}$, $\lim_{\kappa \to 0} \hat{p}_q = \hat{p}$, $\lim_{\kappa \to 0} \hat{H}_q = \hbar \hat{L}$, $\commut{ \hat{x}_q, \hat{p}_q } = i\hbar \kappa$.
\end{theorem}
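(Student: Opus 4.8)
The plan is to obtain this theorem as an immediate corollary of Lemma \ref{QuantToClass_nogo_lemma}, arguing by contradiction. First I would suppose, contrary to the claim, that operators $\hat{x}_q$ and $\hat{p}_q$ with all four listed properties do exist. The hypotheses $\lim_{\kappa \to 0} \hat{x}_q = \hat{x}$, $\lim_{\kappa \to 0} \hat{p}_q = \hat{p}$, and $\lim_{\kappa \to 0} \hat{H}_q = \hbar \hat{L}$ are precisely the hypotheses of the lemma in the special case $\alpha = 1$, $\beta = 1$. Since $U'(x) \not\equiv 0$ is assumed in both statements, the lemma applies directly to these operators.

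The second step is to read off the lemma's conclusion under these choices. Substituting $\alpha = \beta = 1$ into the lemma's formula $\commut{ \hat{x}_q, \hat{p}_q } = i \beta (\alpha + 1/\alpha) \hbar\kappa + o(\kappa)$ forces
\begin{align}
	\commut{ \hat{x}_q, \hat{p}_q } = 2 i \hbar\kappa + o(\kappa) \qquad (\kappa \to 0).
\end{align}
But the fourth assumed property demands $\commut{ \hat{x}_q, \hat{p}_q } = i\hbar \kappa$ exactly. Equating the two expressions for the commutator, dividing through by $i\hbar\kappa$, and letting $\kappa \to 0$ gives $2 = 1$, a contradiction. Hence no such $\hat{x}_q$ and $\hat{p}_q$ can exist.

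The main point is that all of the genuine work is already carried out inside the lemma; the theorem is merely its evaluation at the natural normalization $\alpha = \beta = 1$, which is exactly why the target commutator $\commut{ \hat{x}_q, \hat{p}_q } = i\hbar\kappa$ cannot be met — the lemma insists on an extra factor of two. I do not expect a serious obstacle here, since the hard asymptotic analysis establishing the factor $\alpha + 1/\alpha$ lives entirely in the lemma. The only step demanding a little care is the asymptotic bookkeeping: because the lemma controls the commutator only up to an $o(\kappa)$ remainder, I would compare the two expressions after dividing by $\kappa$, so that the remainder becomes $o(1)$ and vanishes in the limit, leaving the clean numerical contradiction $2 = 1$. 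No further information about which particular linear combination realizes $\hat{x}_q$ and $\hat{p}_q$ is required.
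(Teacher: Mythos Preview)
Your proposal is correct and matches the paper's own proof essentially verbatim: the paper also argues by contradiction, invokes Lemma~\ref{QuantToClass_nogo_lemma} with $\alpha=\beta=1$ to obtain $\commut{\hat{x}_q,\hat{p}_q}=2i\hbar\kappa+o(\kappa)$, and notes this is incompatible with $\commut{\hat{x}_q,\hat{p}_q}=i\hbar\kappa$. Your extra step of dividing by $\kappa$ and passing to the limit to extract the clean contradiction $2=1$ is a slight elaboration of what the paper leaves implicit, but the argument is the same.
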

\begin{proof}
	If such operators exist, then according to lemma \ref{QuantToClass_nogo_lemma}, $\commut{ \hat{x}_q, \hat{p}_q } = 2 i\hbar \kappa + o(\kappa)$, which contradicts the statement of the theorem.
\end{proof}

Let us consider the dependence of the wave function on $\kappa$. Theorem \ref{QuantToClass_nogo_theorem_strong_version} implies the following weaker statement, which can also be demonstrated independently: 
\begin{theorem}[The weak version of the no-go theorem]\label{QuantToClass_nogo_theorem_weak_version}
	Assume $U'(x) \not\equiv 0$. There are no operators $\hat{x}_q$ and $\hat{p}_q$ such that they are linear combinations of $\hat{x}$, $\hat{p}$, $\hat{\lambda}_x$,  $\hat{\lambda}_p$, and 
	\begin{align}
		& \lim_{\kappa \to 0} \left( \hat{x}_q - \hat{x} \right) \ket{\Psi_{\kappa}(t)} = 0, \label{strong_nogo_th_eq1} \\
		& \lim_{\kappa \to 0} \left( \hat{p}_q - \hat{p} \right) \ket{\Psi_{\kappa}(t)} = 0, \label{strong_nogo_th_eq2} \\
		& \lim_{\kappa \to 0} \left( \hat{H}_q - \hbar\hat{L} \right) \ket{\Psi_{\kappa}(t)} = 0, \label{strong_nogo_th_eq3}
	\end{align}
	and $\commut{ \hat{x}_q, \hat{p}_q } = i\hbar \kappa$.
\end{theorem}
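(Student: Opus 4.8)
The plan is to run the coefficient computation of Lemma~\ref{QuantToClass_nogo_lemma} with the matrix-element (weak) limits (\ref{strong_nogo_th_eq1})--(\ref{strong_nogo_th_eq3}) in place of the operator limits used there. First I would record the identities that hold \emph{exactly} for any self-adjoint linear combinations $\hat{x}_q,\hat{p}_q$ of $\hat{x},\hat{p},\hat{\lambda}_x,\hat{\lambda}_p$ obeying $\commut{\hat{x}_q,\hat{p}_q}=i\hbar\kappa$: from the form (\ref{Hq_Hamlitonian}) of $\hat{H}_q$ together with Theorem~\ref{Th_Weyl_commutator_theorem} one gets $\commut{\hat{x}_q,\hat{H}_q}=\frac{i\hbar}{m}\hat{p}_q$ and $\commut{\hat{p}_q,\hat{H}_q}=-i\hbar U'(\hat{x}_q)$, while (\ref{General_form_of_Liouvillian}) supplies the Liouville identities $\commut{\hat{x},\hat{L}}=i\hat{p}/m$ and $\commut{\hat{p},\hat{L}}=-iU'(\hat{x})$. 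Using the classical algebra (\ref{Complete_classical_algebra}) I would also write the assumed commutator as the c-number $\commut{\hat{x}_q,\hat{p}_q}=i\left(\frac{\partial x_q}{\partial x}\frac{\partial p_q}{\partial\lambda_x}-\frac{\partial x_q}{\partial\lambda_x}\frac{\partial p_q}{\partial x}+\frac{\partial x_q}{\partial p}\frac{\partial p_q}{\partial\lambda_p}-\frac{\partial x_q}{\partial\lambda_p}\frac{\partial p_q}{\partial p}\right)$ in the coefficients of (\ref{LinearExp_xq_pq}).

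Next I would test these exact operator identities against the state $\ket{\Psi_{\kappa}(t)}$ solving the Schr\"odinger equation (\ref{SchEq_transition_from_quant_to_class}) with effective constant $\hbar\kappa$. Since $\hat{x}_q-\hat{x}$, $\hat{p}_q-\hat{p}$, and $\hat{H}_q-\hbar\hat{L}$ are self-adjoint, each hypothesis says a variance vanishes, e.g. $\bra{\Psi_{\kappa}}(\hat{x}_q-\hat{x})^2\ket{\Psi_{\kappa}}\to 0$, so by Cauchy--Schwarz any matrix element $\bra{\Psi_{\kappa}}\hat{B}\,(\hat{x}_q-\hat{x})\ket{\Psi_{\kappa}}$ with $\hat{B}^{\dagger}\ket{\Psi_{\kappa}}$ bounded also vanishes. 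Feeding this into the two commutator identities and using (\ref{strong_nogo_th_eq3}) to trade $\hat{H}_q$ for $\hbar\hat{L}$ on the state, I would match them against the Liouville identities not merely at zeroth order but \emph{to leading order in $\kappa$}. Demanding that the resulting relations hold for all $t$ (equivalently, for a complete family of initial data, so that the averaging is stripped exactly as in the passage from (\ref{Avaraged_Equations_for_Liouvillian}) to (\ref{Ch2_Equations_for_Liouvillian})) should return precisely the coefficient asymptotics (\ref{pq_lambdax_and_xq_lambdax}) and (\ref{xq_x_and_pq_p}) with $\alpha=\beta=1$.

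With those asymptotics in hand the conclusion is immediate: substituting them into the c-number commutator of the first paragraph gives $\commut{\hat{x}_q,\hat{p}_q}=2i\hbar\kappa+o(\kappa)$, incompatible with the assumed $\commut{\hat{x}_q,\hat{p}_q}=i\hbar\kappa$. This is the same factor-of-two discrepancy that powers Theorem~\ref{QuantToClass_nogo_theorem_strong_version}, now extracted from the physical (weak) limits alone.

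The step I expect to be the main obstacle is the middle one, where single-state, $\kappa$-dependent hypotheses must be promoted to honest statements about the coefficients. The difficulty is that as $\hbar\kappa\to 0$ the state $\ket{\Psi_{\kappa}(t)}$ typically becomes rapidly oscillatory, so an unbounded operator such as $\hat{\lambda}_x$ or $\hat{\lambda}_p$ can amplify a small coefficient; weak convergence on a single state therefore need not force coefficient convergence, and the Cauchy--Schwarz estimates above require the norms $\|\hat{\lambda}_x\ket{\Psi_{\kappa}}\|$ and $\|\hat{\lambda}_p\ket{\Psi_{\kappa}}\|$ to stay controlled. I would handle this either by imposing the hypotheses uniformly over a complete set of initial states---which upgrades the matrix-element limits to the strong-operator limits of Lemma~\ref{QuantToClass_nogo_lemma} and reduces the claim to Theorem~\ref{QuantToClass_nogo_theorem_strong_version}---or, for a genuinely independent argument, by tracking the $\kappa$-scaling of the second moments $\bra{\Psi_{\kappa}}\hat{\lambda}_x^2\ket{\Psi_{\kappa}}$ and $\bra{\Psi_{\kappa}}\hat{\lambda}_p^2\ket{\Psi_{\kappa}}$ and verifying that the boundedness those estimates need survives the limit.
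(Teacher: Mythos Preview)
The paper does not actually supply a proof of this theorem: immediately before stating it, the authors write that ``Theorem~\ref{QuantToClass_nogo_theorem_strong_version} implies the following weaker statement, which can also be demonstrated independently,'' and then move on. So the paper's ``proof'' is the one-line assertion that the strong version implies the weak one; no independent argument is carried out.

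Your proposal is therefore considerably more ambitious than what the paper does. Your fallback option---impose the hypotheses uniformly over a complete family of initial data, thereby upgrading the state-wise limits to operator limits and invoking Theorem~\ref{QuantToClass_nogo_theorem_strong_version}---is precisely the route the paper gestures at. Your primary plan, rerunning the Lemma~\ref{QuantToClass_nogo_lemma} coefficient extraction directly from the single-state hypotheses via Cauchy--Schwarz, is the ``independent demonstration'' the paper alludes to but never writes down. You have also correctly isolated the genuine difficulty in that route: the semiclassical state $\ket{\Psi_\kappa}$ oscillates on scale $\hbar\kappa$, so $\|\hat\lambda_x\ket{\Psi_\kappa}\|$ and $\|\hat\lambda_p\ket{\Psi_\kappa}\|$ typically blow up like $(\hbar\kappa)^{-1}$, and the Cauchy--Schwarz bounds you need are not automatic. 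Without an explicit control on those second moments, the step from the weak hypotheses to the coefficient asymptotics (\ref{pq_lambdax_and_xq_lambdax})--(\ref{xq_x_and_pq_p}) is a real gap, and you are right to flag it.

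One further point worth noting: the implication ``strong version $\Rightarrow$ weak version'' that the paper invokes is itself not obvious as stated. The weak hypotheses (limits on a single $\kappa$-dependent state) are \emph{less} restrictive than the strong hypotheses (operator limits), so the class of would-be counterexamples to the weak theorem is \emph{larger}, and ruling it out is nominally the harder task. The paper does not address this, so if you want a clean reduction you will in any case have to do something like your complete-family-of-states argument to bridge the gap.
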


This theorem might seem counterintuitive at first sight. To see that the result is correct, we need to elucidate the physical meaning of assumptions (\ref{strong_nogo_th_eq1})-(\ref{strong_nogo_th_eq3}). The quasi-classical wave function in the coordinate representation is known to be of the form
\begin{align}
	\Phi_{\kappa}(x,t) = F (x,t) \exp[ i S(x,t)/(\hbar\kappa) ],
\end{align}
where $S(x,t)$ satisfies the Hamilton-Jacobi equation as $\kappa \to 0$. Consider the action of the momentum operator, $\hat{p} = -i\hbar\kappa \partial /\partial x$, on this wave function
\begin{align}\label{pquantum_to_pclass}
	\hat{p} \Phi_{\kappa}(x,t) = \frac{\partial S(x,t)}{\partial x} \Phi_{\kappa}(x,t) + O(\kappa) \Longrightarrow
	\lim_{\kappa \to 0} \left[ \hat{p} - \mathrsfs{P}(x,t) \right] \Phi_{\kappa}(x,t) = 0,
\end{align}
where $\mathrsfs{P}(x,t) \coloneqq \partial S(x,t) / \partial x$ denotes a classical particle's momentum. Equation (\ref{pquantum_to_pclass}) coincides with condition (\ref{strong_nogo_th_eq2}), which means that the quantum momentum goes over to the classical momentum in the classical limit. Condition (\ref{strong_nogo_th_eq1}) implies the same for the coordinate. Nevertheless, one readily demonstrates that
\begin{align}\label{Hquantum_to_Hclass}
	\lim_{\kappa \to 0} \left[ \kappa\hat{H}_q - \mathrsfs{H}(x,t) \right] \Phi_{\kappa}(x,t)   = 0,
\end{align}
where $\mathrsfs{H}(x,t) \coloneqq \mathrsfs{P}^2 (x,t) / (2m) + U(x)$ is the classical Hamiltonian.  Equation (\ref{Hquantum_to_Hclass}) contradicts condition (\ref{strong_nogo_th_eq3}); thus, the statement of theorem \ref{QuantToClass_nogo_theorem_strong_version} is intuitively correct because the quantum Hamiltonian does not approach the Liouvillian in the classical limit. 

Now we face the dilemma: Hamiltonian (\ref{MixedQuantumClassical_Hamiltonian}) has been introduced as a generator of motion valid in both the classical ($\kappa\to 0$) and quantum ($\kappa \to 1$) cases, and yet the classical limit appears to be inconsistent. Condition (\ref{strong_nogo_th_eq3}) merely seems to demand that such a generalized generator of motion should become the Liouvillian in the classical limit. But, it does not. In fact, condition (\ref{strong_nogo_th_eq3}) imposes this restriction on Hamiltonian (\ref{Hq_Hamlitonian}), which is a special case ($F \equiv 0$) of more general Hamiltonian (\ref{MixedQuantumClassical_Hamiltonian}). The equality $\lim_{\kappa \to 0} \hat{\mathcal{H}} = \hbar \hat{L}$ is achievable for certain functions $F$. Before presenting a specific example of this function, let us prove the following two statements:

\begin{theorem}\label{Specification_of_F_Th1}
	Assume $F(p,x) =  Q(p) + G(x)$ and $U'(x) \not\equiv 0$. If there exist operators $\hat{x}_q$, $\hat{p}_q$, $\hat{\theta}_x$, and $\hat{\theta}_p$ such that they are linear combinations of $\hat{x}$, $\hat{p}$, $\hat{\lambda}_x$,  $\hat{\lambda}_p$, and
	\begin{align}\label{Conditions_Specification_of_F_Th1}
		 \lim_{\kappa \to 0} \hat{x}_q = \hat{x}, \qquad
		 \lim_{\kappa \to 0} \hat{p}_q = \hat{p}, \qquad
		 \lim_{\kappa \to 0} \hat{\theta}_x = \hat{\lambda}_x, \qquad
		 \lim_{\kappa \to 0} \hat{\theta}_p = \hat{\lambda}_p, \qquad
		 \lim_{\kappa \to 0} \hat{\mathcal{H}} = \hbar \hat{L},
	\end{align}
	then
	\begin{align}\label{Specification_of_F_Th1_Eqresult}
		F(p, x) = - \frac{p^2}{2 m \kappa} - \frac{1}{\kappa} U(x) + O(1). \qquad (\kappa \to 0) 
	\end{align}
\end{theorem}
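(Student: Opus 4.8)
The plan is to exploit the fact that, for the classical limit $\lim_{\kappa\to 0}\hat{\mathcal{H}}=\hbar\hat{L}$ to be finite, the manifestly singular $1/\kappa$ piece of the Hamiltonian (\ref{MixedQuantumClassical_Hamiltonian}) must be cancelled, and that under the separable ansatz $F(p,x)=Q(p)+G(x)$ this cancellation rigidly fixes the singular parts of $Q$ and $G$. First I would substitute $F=Q+G$ into (\ref{MixedQuantumClassical_Hamiltonian}), writing $\hat{P}=\hat{p}_q-\hbar\kappa\hat{\theta}_x$ and $\hat{X}=\hat{x}_q+\hbar\kappa\hat{\theta}_p$, so that
\begin{align}
	\hat{\mathcal{H}}=\frac{1}{\kappa}\left[\frac{\hat{p}_q^2}{2m}+U(\hat{x}_q)\right]+Q(\hat{P})+G(\hat{X}). \notag
\end{align}
The hypotheses then give the immediate reductions $\hat{P}\to\hat{p}$, $\hat{X}\to\hat{x}$ and $\hat{p}_q^2/(2m)+U(\hat{x}_q)\to\hat{p}^2/(2m)+U(\hat{x})$ as $\kappa\to 0$, because $\hat{x}_q\to\hat{x}$, $\hat{p}_q\to\hat{p}$ while the auxiliary operators stay bounded ($\hat{\theta}_x\to\hat{\lambda}_x$, $\hat{\theta}_p\to\hat{\lambda}_p$), so that $\hbar\kappa\hat{\theta}_x,\hbar\kappa\hat{\theta}_p\to 0$; here the linear-combination structure and the Weyl composition rules of Theorem \ref{Th_Weyl_commutator_theorem} make these convergences routine, since the offending $\hat{\lambda}$-components enter with vanishing coefficients.

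Next I would multiply through by $\kappa$. Since $\hat{\mathcal{H}}\to\hbar\hat{L}$ is finite, $\kappa\hat{\mathcal{H}}\to 0$, and subtracting the now-finite kinetic-plus-potential operator yields
\begin{align}
	\lim_{\kappa\to 0}\kappa\left[Q(\hat{P})+G(\hat{X})\right]=-\frac{\hat{p}^2}{2m}-U(\hat{x}). \notag
\end{align}
Setting $\tilde{Q}_\kappa(s)=\kappa Q(s)$ and $\tilde{G}_\kappa(s)=\kappa G(s)$ and passing to the $xp$-representation (permissible because $\hat{x}$ and $\hat{p}$ commute by (\ref{Complete_classical_algebra}) and are therefore simultaneous multiplication operators), I would read off, at the level of Weyl symbols, the scalar functional equation $\tilde{Q}_0(p)+\tilde{G}_0(x)=-p^2/(2m)-U(x)$, where $\tilde{Q}_0,\tilde{G}_0$ are the limiting symbols. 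A function of $p$ alone added to a function of $x$ alone can reproduce $-p^2/(2m)-U(x)$ only if $\tilde{Q}_0(p)=-p^2/(2m)+C$ and $\tilde{G}_0(x)=-U(x)-C$ for some constant $C$; the compensating $\pm C/\kappa$ contributions cancel in $F=Q+G$, and undoing the $\kappa$-scaling leaves $F(p,x)=-p^2/(2m\kappa)-U(x)/\kappa+O(1)$, which is the assertion. The hypothesis $U'(x)\not\equiv 0$ is what guarantees that the $x$-dependent balance is non-degenerate, so that $\tilde{G}_0$ is genuinely fixed to $-U(x)$ rather than to an arbitrary constant.

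The hard part will be the passage from the operator identity to the scalar functional equation \emph{with the correct remainder order}. Two points require care: one must justify interchanging $\lim_{\kappa\to 0}$ with the Weyl functional calculus, so that $\tilde{Q}_\kappa(\hat{P})\to\tilde{Q}_0(\hat{p})$ as a pure multiplication operator and the $\hat{\lambda}$-carrying corrections in $\hat{P}$, $\hat{X}$ and in $\hat{p}_q^2$, $U(\hat{x}_q)$ drop out; and one must control the size of those corrections. A naive use of the hypotheses supplies them only as $o(1)$, which would degrade the remainder to $o(1/\kappa)$. To sharpen this to the stated $O(1)$ I would invoke the quantum-algebra constraint $\commut{\hat{x}_q,\hat{p}_q}=i\hbar\kappa$ obeyed by the realizations: writing $\hat{x}_q=\hat{x}+\delta\hat{x}$, $\hat{p}_q=\hat{p}+\delta\hat{p}$ and using $\commut{\hat{x},\hat{p}}=0$, this relation forces the $\hat{\lambda}$-coefficients of $\delta\hat{x},\delta\hat{p}$ to be $O(\kappa)$, which is exactly the order bookkeeping already carried out in Lemma \ref{QuantToClass_nogo_lemma}. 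With those estimates in hand the kinetic-plus-potential term equals $\hat{p}^2/(2m)+U(\hat{x})+O(\kappa)$, the two offsetting $C_\kappa/\kappa$ constants cancel in $F$, the remainder is genuinely bounded, and the separation-of-variables conclusion is then elementary.
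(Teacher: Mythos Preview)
Your overall strategy---multiply $\hat{\mathcal{H}}$ by $\kappa$ and exploit finiteness of the limit---is different from the paper's and, as you yourself diagnose, only delivers the remainder $o(1/\kappa)$ rather than the asserted $O(1)$. The trouble is with your proposed repair. You plan to import the $O(\kappa)$ coefficient estimates by invoking the relation $\commut{\hat{x}_q,\hat{p}_q}=i\hbar\kappa$ and the bookkeeping of Lemma~\ref{QuantToClass_nogo_lemma}. Neither source is available here. The commutator relation is \emph{not} among the hypotheses of Theorem~\ref{Specification_of_F_Th1}: the operators $\hat{x}_q,\hat{p}_q,\hat{\theta}_x,\hat{\theta}_p$ are assumed only to be linear combinations of the classical operators with the listed limits, and the paper's own proof never uses that commutator. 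Even if you did assume it, the single relation $\commut{\hat{x}_q,\hat{p}_q}=i\hbar\kappa$ constrains only the combination $\partial p_q/\partial\lambda_x-\partial x_q/\partial\lambda_p$ (up to $o(1)$ cross terms), not each coefficient separately, so it cannot supply individual $O(\kappa)$ bounds. And the estimates of Lemma~\ref{QuantToClass_nogo_lemma} were derived from $\lim_{\kappa\to 0}\hat{H}_q=\beta\hbar\hat{L}$ with $F\equiv 0$, a hypothesis that is both different from and (by Theorem~\ref{QuantToClass_nogo_theorem_strong_version}) incompatible with the present one; you cannot transplant its conclusions into this setting.

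The paper takes a different route that uses more of the hypothesis than the mere finiteness of the limit. It differentiates the condition $\lim_{\kappa\to 0}\hat{\mathcal{H}}=\hbar\hat{L}$ with respect to the auxiliary variables $\hat{\lambda}_x$ and $\hat{\lambda}_p$, matching $\partial\hat{\mathcal{H}}/\partial\hat{\lambda}_x\to\hbar\,\partial\hat{L}/\partial\hat{\lambda}_x=\hbar\hat{p}/m$ and $\partial\hat{\mathcal{H}}/\partial\hat{\lambda}_p\to -\hbar U'(\hat{x})$. Because the argument $\hat{p}_q-\hbar\kappa\hat{\theta}_x$ of $Q$ carries an explicit $-\hbar\kappa\hat{\theta}_x$ with $\partial\theta_x/\partial\lambda_x=1+o(1)$, this differentiation produces a term $-\hbar\kappa\,Q'(\hat{p})$ with a \emph{definite} $O(\kappa)$ coefficient, and the separability $F=Q(p)+G(x)$ lets one isolate the $\hat{p}$-dependent and $\hat{x}$-dependent parts on each side. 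Solving those balances yields $Q'(\hat{p})=-\hat{p}/(m\kappa)+O(1)$ and $G'(\hat{x})=-U'(\hat{x})/\kappa+O(1)$ directly, without ever needing the quantum commutator or Lemma~\ref{QuantToClass_nogo_lemma}. Your multiply-by-$\kappa$ argument discards precisely this $\hat{\lambda}$-derivative information, which is why the remainder order degrades.
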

\begin{proof}
	The current proof is similar to the proof of lemma \ref{QuantToClass_nogo_lemma}. From Eq. (\ref{Conditions_Specification_of_F_Th1}), we have 
	\begin{align}
		& \hbar \frac{\partial \hat{L}}{\partial \hat{\lambda}_x} 
			= \lim_{\kappa \to 0} \frac{\partial \hat{\mathcal{H}}}{\partial \hat{\lambda}_x}
		\Longrightarrow \notag\\
		& \hbar\frac{\hat{p}}{m} = \lim_{\kappa \to 0}\left[ \frac{\hat{p}}{\kappa m} \frac{\partial p_q}{\partial \lambda_x} 
			+ \frac{1}{\kappa} U'(\hat{x}) \frac{\partial x_q}{\partial \lambda_x} 
			+ F_1' (\hat{p}, \hat{x}) \left( \frac{\partial p_q}{\partial \lambda_x} 
				- \hbar \kappa \frac{\partial\theta_x}{\partial \lambda_x} \right)
			+ F_2' (\hat{p}, \hat{x}) \left( \frac{\partial x_q}{\partial \lambda_x} 
				+ \hbar \kappa \frac{\partial\theta_p}{\partial \lambda_x} \right)
			\right], \label{Specification_of_F_Th1_Eq1} \\
		& \hbar \frac{\partial \hat{L}}{\partial \hat{\lambda}_p} 
			= \lim_{\kappa \to 0} \frac{\partial \hat{\mathcal{H}}}{\partial \hat{\lambda}_p}
		\Longrightarrow \notag\\
		& -\hbar U'(\hat{x}) = \lim_{\kappa \to 0} \left[ \frac{\hat{p}}{\kappa m} \frac{\partial p_q}{\partial \lambda_p} 
			+ \frac{1}{\kappa} U'(\hat{x}) \frac{\partial x_q}{\partial \lambda_p} 
			+ F'_1(\hat{p}, \hat{x}) \left( \frac{\partial p_q}{\partial \lambda_p} 
				- \hbar\kappa \frac{\partial \theta_x}{\partial \lambda_p}\right) 
			+ F'_2(\hat{p}, \hat{x}) \left( \frac{\partial x_q}{\partial \lambda_p} 
				+ \hbar\kappa \frac{\partial \theta_p}{\partial \lambda_p} \right) 
			\right], \label{Specification_of_F_Th1_Eq2}
	\end{align}
	where $F'_1$ and $F'_2$ denote the partial derivatives of the function $F$ with respect to the first and second arguments, respectively [see Eq. (\ref{LinearExp_xq_pq}) regarding other notations].  Due to the assumption $F(p,x) = Q(p) + G(x)$, the expressions under the limits in Eqs. (\ref{Specification_of_F_Th1_Eq1}) and (\ref{Specification_of_F_Th1_Eq2}) can be represented as the sum of the term depending on $\hat{x}$ and the term depending on $\hat{p}$. For Eqs. (\ref{Specification_of_F_Th1_Eq1}) and (\ref{Specification_of_F_Th1_Eq2}) to be consistent, the first term must be of $o(1)$ in the case of Eq. (\ref{Specification_of_F_Th1_Eq1}), and the second term must be of $o(1)$ in Eq. (\ref{Specification_of_F_Th1_Eq2}). Since ${\partial\theta_x}/{\partial \lambda_x} = 1+o(1)$, ${\partial\theta_p}/{\partial \lambda_p} = 1 + o(1)$, $\partial p_q/\partial \lambda_x = o(1)$, and $\partial x_q / \partial \lambda_p = o(1)$, we obtain
	\begin{align}
		&  \hbar\frac{\hat{p}}{m} = \lim_{\kappa \to 0}\left[ \frac{\hat{p}}{\kappa m} \frac{\partial p_q}{\partial \lambda_x}  
			+ Q' (\hat{p}) \left( \frac{\partial p_q}{\partial \lambda_x} - \hbar \kappa \right) \right] 
			\Longrightarrow  Q' (\hat{p})  = - \frac{\hat{p}}{m\kappa} + O(1), \\
		& -\hbar U'(\hat{x}) = \lim_{\kappa \to 0} \left[ \frac{1}{\kappa} U'(\hat{x}) \frac{\partial x_q}{\partial \lambda_p} 
			+ G'(\hat{x}) \left( \frac{\partial x_q}{\partial \lambda_p} + \hbar\kappa \right) \right]
			\Longrightarrow G'(\hat{x}) = -\frac{1}{\kappa} U'(\hat{x}) + O(1).
	\end{align}
	It can be verified that the derived expressions for $Q' (\hat{p})$ and $G'(\hat{x})$ indeed simultaneously satisfy Eqs. (\ref{Specification_of_F_Th1_Eq1}) and (\ref{Specification_of_F_Th1_Eq2}). Hence, we finally reach Eq. (\ref{Specification_of_F_Th1_Eqresult}).
\end{proof}

\begin{theorem}\label{Specification_of_F_Th2}
	Assume $U'(x) \not\equiv 0$. If there exist operators $\hat{x}_q$, $\hat{p}_q$, $\hat{\theta}_x$, and $\hat{\theta}_p$ such that they are linear combinations of $\hat{x}$, $\hat{p}$, $\hat{\lambda}_x$,  $\hat{\lambda}_p$, and
	\begin{align}
		 \lim_{\kappa \to 0} \hat{x}_q = \hat{x}, \quad
		 \lim_{\kappa \to 0} \hat{p}_q = \hat{p}, \quad
		 \lim_{\kappa \to 0} \hat{\theta}_x = \hat{\lambda}_x, \quad
		 \lim_{\kappa \to 0} \hat{\theta}_p = \hat{\lambda}_p, \quad
		 \lim_{\kappa \to 0} \hat{\mathcal{H}} = \hbar \hat{L}, \quad
		 \frac{\partial x_q}{\partial \lambda_x} = o(\kappa), \quad
		 \frac{\partial p_q}{\partial \lambda_p} = o(\kappa), 
	\end{align}
	then
	\begin{align}\label{Specification_of_F_Th2_Eqresult}
		F(p, x) = - \frac{p^2}{2 m \kappa} - \frac{1}{\kappa} U(x) + O(1). \qquad (\kappa \to 0) 
	\end{align}
\end{theorem}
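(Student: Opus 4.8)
The plan is to follow the proof of Theorem~\ref{Specification_of_F_Th1} almost line for line, with the separability hypothesis $F(p,x)=Q(p)+G(x)$ used there replaced by the two structural hypotheses $\partial x_q/\partial\lambda_x=o(\kappa)$ and $\partial p_q/\partial\lambda_p=o(\kappa)$ assumed here. First I would write each of $\hat{x}_q,\hat{p}_q,\hat{\theta}_x,\hat{\theta}_p$ as a linear combination of $\hat{x},\hat{p},\hat{\lambda}_x,\hat{\lambda}_p$ in the manner of Eq.~(\ref{LinearExp_xq_pq}) and read off from the limits in (\ref{Conditions_Specification_of_F_Th1}) the leading orders of the coefficients: the diagonal coefficients $\partial x_q/\partial x,\partial p_q/\partial p,\partial\theta_x/\partial\lambda_x,\partial\theta_p/\partial\lambda_p$ are $1+o(1)$ and every off-diagonal coefficient is $o(1)$; I will especially need $\partial\theta_p/\partial\lambda_x=o(1)$ and $\partial\theta_x/\partial\lambda_p=o(1)$. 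As in Theorem~\ref{Specification_of_F_Th1}, I would carry the ansatz $F_1',F_2'=O(1/\kappa)$ for the partial derivatives of $F$ (evaluated at the shifted arguments $\hat{p}_q-\hbar\kappa\hat{\theta}_x$ and $\hat{x}_q+\hbar\kappa\hat{\theta}_p$, which tend to $\hat{p}$ and $\hat{x}$ respectively) and verify its self-consistency at the end.

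Next I would differentiate the identity $\lim_{\kappa\to0}\hat{\mathcal{H}}=\hbar\hat{L}$ with respect to $\hat{\lambda}_x$ and $\hat{\lambda}_p$, applying the chain rule through $\hat{x}_q,\hat{p}_q,\hat{\theta}_x,\hat{\theta}_p$ and Theorem~\ref{Th_Weyl_commutator_theorem}, exactly as in Eqs.~(\ref{Specification_of_F_Th1_Eq1})--(\ref{Specification_of_F_Th1_Eq2}); the only change is that $F_1',F_2'$ are now generic rather than $Q',G'$. This produces two asymptotic identities, with left-hand sides $\hbar\hat{p}/m$ and $-\hbar U'(\hat{x})$, coming from $\partial\hat{L}/\partial\hat{\lambda}_x=\hat{p}/m$ and $\partial\hat{L}/\partial\hat{\lambda}_p=-U'(\hat{x})$ for the $\hat{L}$ of Eq.~(\ref{General_form_of_Liouvillian}).

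The core step is the term-killing that the structural hypotheses perform in place of separability. In the $\hat{\lambda}_x$-identity the two $\hat{x}$-carrying contributions are $\kappa^{-1}U'(\hat{x}_q)\,\partial x_q/\partial\lambda_x$ and $F_2'(\partial x_q/\partial\lambda_x+\hbar\kappa\,\partial\theta_p/\partial\lambda_x)$: the hypothesis $\partial x_q/\partial\lambda_x=o(\kappa)$ renders the first $o(1)$, and combined with $\partial\theta_p/\partial\lambda_x=o(1)$ and $F_2'=O(1/\kappa)$ it renders the second $o(1)$ as well, so both vanish in the limit. What remains is the $\hat{p}$-only relation $\hbar\hat{p}/m=\lim_{\kappa\to0}[\kappa^{-1}m^{-1}\hat{p}\,\partial p_q/\partial\lambda_x+F_1'(\hat{p},\hat{x})(\partial p_q/\partial\lambda_x-\hbar\kappa)]$, formally identical to the corresponding line of Theorem~\ref{Specification_of_F_Th1}, from which $F_1'(\hat{p},\hat{x})=-\hat{p}/(m\kappa)+O(1)$ and hence $F_1'$ depends on $\hat{p}$ alone to leading order. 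Symmetrically, in the $\hat{\lambda}_p$-identity the hypothesis $\partial p_q/\partial\lambda_p=o(\kappa)$ together with $\partial\theta_x/\partial\lambda_p=o(1)$ annihilates the two $\hat{p}$-carrying terms and leaves $-\hbar U'(\hat{x})=\lim_{\kappa\to0}[\kappa^{-1}U'(\hat{x})\,\partial x_q/\partial\lambda_p+F_2'(\hat{p},\hat{x})(\partial x_q/\partial\lambda_p+\hbar\kappa)]$, whence $F_2'(\hat{p},\hat{x})=-U'(\hat{x})/\kappa+O(1)$. Integrating $F_1'$ in $p$ and $F_2'$ in $x$ and matching the two antiderivatives yields $F(p,x)=-p^2/(2m\kappa)-U(x)/\kappa+O(1)$, as claimed.

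The step I expect to be delicate is solving each reduced single-variable relation for $F_1'$ and $F_2'$. Setting $\partial p_q/\partial\lambda_x=\gamma\kappa+o(\kappa)$ and $\kappa F_1'\to\psi$, the $\hat{\lambda}_x$-relation becomes $\hbar\hat{p}/m=\gamma\hat{p}/m+\psi(\gamma-\hbar)$, which isolates $\psi=-\hat{p}/m$ cleanly only when $\gamma\neq\hbar$; the $\hat{\lambda}_p$-relation analogously requires $\delta\neq-\hbar$, where $\partial x_q/\partial\lambda_p=\delta\kappa+o(\kappa)$. I would therefore have to argue, as is implicit in Lemma~\ref{QuantToClass_nogo_lemma} and Theorem~\ref{Specification_of_F_Th1}, that these degenerate sub-cases are either excluded or vacuous for the conclusion; reassuringly, the admissible realization (\ref{ClassicQuantumAlg_isomorphism})(iii) has $\gamma=\hbar/2$ and $\delta=-\hbar/2$ and thus avoids both degeneracies. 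The remaining work is the same routine asymptotic bookkeeping already done for Theorem~\ref{Specification_of_F_Th1}, including the a posteriori check that $F_1',F_2'=O(1/\kappa)$ is consistent.
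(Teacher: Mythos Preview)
Your proposal is correct and follows essentially the same route as the paper: the paper's proof simply states that Eq.~(\ref{Specification_of_F_Th2_Eqresult}) ``readily follows from Eqs.~(\ref{Specification_of_F_Th1_Eq1}) and (\ref{Specification_of_F_Th1_Eq2})'' once one uses the estimates $\partial x_q/\partial\lambda_x=o(\kappa)$, $\partial p_q/\partial\lambda_p=o(\kappa)$, $\partial\theta_x/\partial\lambda_x=1+o(1)$, $\partial\theta_p/\partial\lambda_p=1+o(1)$, and the $o(1)$ bounds on the remaining off-diagonal coefficients---precisely the term-killing mechanism you spell out. Your treatment is in fact more explicit than the paper's, and the degeneracy issue you flag (the cases $\gamma=\hbar$, $\delta=-\hbar$) is not addressed in the paper either.
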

\begin{proof}
	Equation (\ref{Specification_of_F_Th2_Eqresult}) readily follows from Eqs. (\ref{Specification_of_F_Th1_Eq1}) and (\ref{Specification_of_F_Th1_Eq2}) after taking into account the following estimates: ${\partial x_q}/{\partial \lambda_x} = o(\kappa)$, $\partial x_q / \partial \lambda_p = o(1)$, ${\partial\theta_x}/{\partial \lambda_x} = 1+o(1)$, $\partial \theta_p / \partial \lambda_x = o(1)$, ${\partial p_q}/{\partial \lambda_p} = o(\kappa)$, $\partial p_q / \partial \lambda_x = o(1)$, $\partial \theta_x / \partial \lambda_p = o(1)$, and ${\partial\theta_p}/{\partial \lambda_p} = 1 + o(1)$.
\end{proof}

Theorems \ref{Specification_of_F_Th1} and \ref{Specification_of_F_Th2} quite explicitly specify permissible forms of the function $F$. Thus, the leading order term in Eq. (\ref{Specification_of_F_Th2_Eqresult}) shall be taken as the definition of the function $F$. Consider the Hamitonian
\begin{align}\label{Hamiltonian_Hqc}
	\hat{\mathcal{H}}_{qc} &\coloneqq \frac{1}{\kappa} \left[ \frac{ \hat{p}_q^2 }{2 m} + U(\hat{x}_q) \right] 
						- \frac{1}{2m\kappa} \left( \hat{p}_q - \hbar\kappa \hat{\vartheta}_x \right)^2 
						- \frac{1}{\kappa} U \left( \hat{x}_q + \hbar\kappa \hat{\vartheta}_p \right) \notag\\
					& \equiv \frac{\hbar}{m} \left( \hat{p}_q - \frac{\hbar\kappa}2 \hat{\theta}_x \right) \hat{\theta}_x
						+ \frac{1}{\kappa} \left[ U(\hat{x}_q) - U(\hat{x}_q + \hbar\kappa \hat{\theta}_p) \right].
\end{align} 
Comparing $\hat{\mathcal{H}}_{qc}$ [Eq. (\ref{Hamiltonian_Hqc})] with $\hat{L}$ [Eq. (\ref{General_form_of_Liouvillian})], we deduce that 
\begin{align}\label{Ansatse_ClassQuant_Alg_Isomorphism}
	\hat{p}_q - \hbar\kappa \hat{\theta}_x /2 = \hat{p}, \qquad \hat{\theta}_x = \hat{\lambda}_x, \qquad 
	\alpha \hat{x}_q + \beta \hbar\kappa \hat{\theta}_p = \hat{x},
\end{align}
where $\alpha$ and $\beta$ are unknown constants. Substituting Eqs. (\ref{Ansatse_ClassQuant_Alg_Isomorphism}) into Eq. (\ref{Hamiltonian_Hqc}) and requiring
\begin{align}\label{Quantum_to_Classical_limit}
	\lim_{\kappa \to 0} \hat{\mathcal{H}}_{qc} = \hbar \hat{L},
\end{align}
we conclude that only isomorphism (\ref{ClassicQuantumAlg_isomorphism}) between the classical and quantum algebras is permitted. The quantum variables $\hat{x}_q$ and $\hat{p}_q$ are known as the Bopp operators \cite{Bopp1956}. Hamiltonian (\ref{Hamiltonian_Hqc}) expressed solely in terms of the classical operators reads 
\begin{align}
	\hat{\mathcal{H}}_{qc} &= \frac{\hbar}{m} \hat{p} \hat{\lambda}_x + 
						\frac{1}{\kappa} U\left( \hat{x} - \frac{\hbar\kappa}{2} \hat{\lambda}_p \right) -
						\frac{1}{\kappa} U\left( \hat{x} + \frac{\hbar\kappa}{2} \hat{\lambda}_p \right).
\end{align}
Hamiltonian (\ref{Hamiltonian_Hqc}) exactly coincides with Liouvillian (\ref{General_form_of_Liouvillian}) for some potentials $U$. Let us find all such cases:
\begin{theorem}\label{Golden_Harmonic_Osc_Theorem}
	$\hat{\mathcal{H}}_{qc} \equiv \hbar \hat{L}$ if and only if the potential $U$ is a quadratic polynomial. 
\end{theorem}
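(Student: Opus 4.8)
The plan rests on the structural fact, drawn from the classical algebra (\ref{Complete_classical_algebra}), that $\commut{\hat{x}, \hat{\lambda}_p} = 0$: the shifted arguments $\hat{x} \pm \frac{\hbar\kappa}{2}\hat{\lambda}_p$ are built from commuting operators, so $U(\hat{x} \pm \frac{\hbar\kappa}{2}\hat{\lambda}_p)$ expands as an ordinary Taylor series in $\hat{\lambda}_p$ whose coefficients $U^{(n)}(\hat{x})$ commute with $\hat{\lambda}_p$. First I would cancel the common kinetic term $\frac{\hbar}{m}\hat{p}\hat{\lambda}_x$ that appears in both $\hat{\mathcal{H}}_{qc}$ [Eq. (\ref{Hamiltonian_Hqc})] and $\hbar\hat{L}$ [Eq. (\ref{General_form_of_Liouvillian})], so that the asserted identity $\hat{\mathcal{H}}_{qc} \equiv \hbar\hat{L}$ collapses to
\begin{align}
	\frac{1}{\kappa}\left[ U\!\left(\hat{x} - \frac{\hbar\kappa}{2}\hat{\lambda}_p\right) - U\!\left(\hat{x} + \frac{\hbar\kappa}{2}\hat{\lambda}_p\right) \right] = -\hbar U'(\hat{x})\,\hat{\lambda}_p + \hbar f(\hat{x}, \hat{p}). \notag
\end{align}

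Substituting the Taylor expansion, the even powers of $\hat{\lambda}_p$ cancel in the difference and only the odd ones survive,
\begin{align}
	\frac{1}{\kappa}\left[ U\!\left(\hat{x} - \frac{\hbar\kappa}{2}\hat{\lambda}_p\right) - U\!\left(\hat{x} + \frac{\hbar\kappa}{2}\hat{\lambda}_p\right) \right] = -\frac{2}{\kappa}\sum_{k=0}^{\infty}\frac{U^{(2k+1)}(\hat{x})}{(2k+1)!}\left(\frac{\hbar\kappa}{2}\right)^{2k+1}\hat{\lambda}_p^{\,2k+1}. \notag
\end{align}
The $k=0$ term is exactly $-\hbar U'(\hat{x})\,\hat{\lambda}_p$, reproducing the potential term of $\hbar\hat{L}$; the task thus reduces to showing that the remaining terms, which carry powers $\hat{\lambda}_p^{\,3}, \hat{\lambda}_p^{\,5}, \dots$ with coefficients proportional to $U^{(2k+1)}(\hat{x})$ for $k\ge 1$, are forced to vanish precisely when $U$ is quadratic.

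To conclude I would compare the two sides coefficient by coefficient in $\hat{\lambda}_p$, which is legitimate by the injectivity of the Weyl map (\ref{Weyl_main_definition}); most concretely, in the $xp$-representation ($\hat{\lambda}_p = -i\partial/\partial p$, with $f(\hat{x},\hat{p}) = f(x,p)$ acting multiplicatively) both sides become differential operators in $p$ of definite order, whose coefficients must agree term by term since differential operators of distinct orders are linearly independent. The zeroth-order ($\hat{\lambda}_p$-free) part then gives $\hbar f(x,p) = 0$, hence $f \equiv 0$; the first-order part reproduces the already-matched $-\hbar U'$; and the coefficient of $\hat{\lambda}_p^{\,2k+1}$ for each $k\ge 1$ forces $U^{(2k+1)}(x)\equiv 0$. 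In particular $U''' \equiv 0$, which is equivalent to $U$ being a quadratic polynomial; conversely, if $U$ is quadratic then $U'''$ and all higher derivatives vanish, the sum truncates at $k=0$, and $\hat{\mathcal{H}}_{qc} = \hbar\hat{L}$ with $f\equiv 0$, completing the equivalence. I expect the only delicate point to be the legitimacy of this term-by-term comparison; it is secured by the commutativity $\commut{\hat{x},\hat{\lambda}_p}=0$, which removes any ordering ambiguity, while everything else is the elementary observation that the odd part of $U$ about $\hat{x}$ truncates exactly for quadratic $U$.
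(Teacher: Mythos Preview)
Your proof is correct and takes a genuinely more elementary route than the paper's. After cancelling the kinetic term, the paper passes to the $x\lambda_p$-representation (where both $\hat{x}$ and $\hat{\lambda}_p$ act by multiplication), obtains the functional equation $U(x-\alpha)-U(x+\alpha)=-2\alpha\,U'(x)$ with $\alpha=\hbar\kappa\lambda_p/2$, then Fourier-transforms in $x$ to reach $[\alpha\omega-\sin(\alpha\omega)]\tilde U(\omega)=0$, argues that $\tilde U$ must be a distribution supported at the origin, and shows by a combinatorial expansion that only $\delta,\delta',\delta''$ survive. Your argument instead expands in powers of $\hat{\lambda}_p$ and matches coefficients directly, which bypasses the Fourier/distribution machinery entirely and yields $U^{(2k+1)}\equiv 0$ for every $k\ge 1$, hence $U'''\equiv 0$, in one stroke. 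The trade-off is generality: the paper's approach works at the level of tempered distributions, while your Taylor expansion as written presupposes analyticity of $U$; however, your conclusion extends to merely smooth $U$ if you recast the coefficient comparison as repeated differentiation of the identity $U(x-\alpha)-U(x+\alpha)+2\alpha U'(x)=0$ in $\alpha$ at $\alpha=0$, which needs only $U\in C^{3}$ to force $U'''\equiv 0$. You are also slightly more careful than the paper in explicitly dispatching the free term $f(\hat{x},\hat{p})$ in $\hat L$: the paper's proof silently drops it, whereas your observation that the left side contains only odd powers of $\hat{\lambda}_p$ (so no $\hat{\lambda}_p^{0}$ contribution) cleanly forces $f\equiv 0$.
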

\begin{proof}
	The $x\lambda_p$-representation of the classical algebra is
	\begin{align}\label{xlambda_p_representation}
		\hat{x} = x, \qquad \hat{\lambda}_x = -i \frac{\partial}{\partial x}, \qquad 
		\hat{p} = i \frac{\partial}{\partial \lambda_p}, \qquad \hat{\lambda}_p = \lambda_p.
	\end{align}
	Equation $\hat{\mathcal{H}}_{qc} = \hbar \hat{L}$ written in the $x\lambda_p$-representation leads to
	\begin{align}
		U( x - \alpha ) - U( x + \alpha) = -2\alpha U'(x),
		\qquad \alpha \coloneqq \hbar\kappa\lambda_p /2. 
	\end{align} 
	Fourier transforming this equation with respect to $x$, we obtain 
	\begin{align}\label{U_forier_transf_eq}
		\left[ \alpha \omega - \sin\left( \alpha \omega \right) \right] \tilde{U}(\omega) = 0,
	\end{align}
	where $\tilde{U}(\omega) = \int dx \, e^{-i\omega x} U(x) / \sqrt{2\pi}$. Since the equation $y = \sin y$ has the unique solution $y=0$, the non-trivial solution of Eq. (\ref{U_forier_transf_eq}) must be a distribution with support at the origin, whose most general form reads \cite{Gelfand1964}
	\begin{align}\label{Ufourer_deltafunc_expantion}
		\tilde{U} (\omega) = \sum_{n=0}^{\infty} c_n \delta^{(n)} (\omega).
	\end{align}
	From the identity (see, e.g., Ref. \cite{Kanwal1998})
	\begin{align}
		x^m \delta^{(n)} (x) = \left\{
		\begin{array}{ccc}
			(-1)^m \frac{n!}{(n-m)!} \delta^{(n-m)}(x) & \mbox{ if } & m < n, \\
			(-1)^n n! \delta (x)				& \mbox{ if } & m=n, \\
			0							& \mbox{ if } & m > n,
		\end{array} \right. \notag
	\end{align}
	one derives 
	\begin{align}
		& \left[ \alpha \omega - \sin(\alpha \omega) \right] \delta^{(2n+1)} (\omega) = \sum_{m=1}^{n} (-1)^m \alpha^{2m+1} 
			\left( 2n+1 \atop 2m+1 \right) \delta^{(2n-2m)}(\omega), \notag \\
		& \left[ \alpha \omega - \sin(\alpha \omega) \right] \delta^{(2n)} (\omega) = \sum_{m=1}^{n-1} (-1)^m \alpha^{2m+1}
			\left( 2n \atop 2m+1 \right) \delta^{(2n-2m-1)} (\omega). \notag
	\end{align}
	From these equations It follows that the first three terms in expansion (\ref{Ufourer_deltafunc_expantion}) are arbitrary. Moreover, substituting expansion (\ref{Ufourer_deltafunc_expantion}) into Eq. (\ref{U_forier_transf_eq}), we obtain
	\begin{align}
		& \sum_{n=0}^{\infty} c_{n} \left[ \alpha \omega - \sin\left( \alpha \omega \right) \right] 
			\delta^{(n)} (\omega) = \sum_{n=0}^{\infty} f_{2n+1} \left( \alpha \right) \delta^{(2n)}(\omega)
			+ \sum_{n=1}^{\infty} f_{2n} \left( \alpha \right) \delta^{(2n-1)}(\omega) = 0, \label{U_forier_transf_eq_expanded}
	\end{align}
	where
	\begin{align}
		f_p (\alpha) = \sum_{m=1}^{\infty} (-1)^m \alpha^{2m+1} c_{p+2m} \left( p + 2m \atop 2m+1 \right). \notag
	\end{align}
	Equation (\ref{U_forier_transf_eq_expanded}) must be satisfied for all $\alpha$, then $f_p(\alpha) \equiv 0$, $\forall p$. This condition implies that $c_n = 0$, $n=3,4,5,\ldots$ because the functions $f_p$ are analytic by construction. Therefore, the most general solution of Eq. (\ref{U_forier_transf_eq}) reads
	\begin{align}
		\tilde{U}(\omega) = c_0 \delta(\omega) + c_1 \delta'(\omega) + c_2 \delta'' (\omega). \notag
	\end{align}
	Finally, we note that the inverse Fourier transform of this function is a quadratic polynomial. 
\end{proof}

\section{Derivation of the Canonical Quantization Rule}\label{Sec_CanonicalQuantz}

The equation of motion for a classical observable $f=f(p,x)$ reads
\begin{align}
	df/dt = \Poisson{f, \mathrsfs{H}},
\end{align}
where $\mathrsfs{H} = \mathrsfs{H}(p,x)$ is the classical Hamiltonian and $\poisson{\cdot, \cdot}$ denotes the Poisson bracket
\begin{align}
	\Poisson{f, g} \coloneqq \frac{\partial f}{\partial x} \frac{\partial g}{\partial p} - \frac{\partial f}{\partial p} \frac{\partial g}{\partial x}.
\end{align}

We apply ODM to the following equation of motion
\begin{align}\label{canonical_quantiz_EhrenfestTh}
	\frac{d}{dt} \bra{\Psi(t)} \hat{f} \ket{\Psi(t)} = \bra{\Psi(t)} \widehat{\Poisson{f, \mathrsfs{H}}} \ket{\Psi(t)},
\end{align}
where $\widehat{\Poisson{f, \mathrsfs{H}}}$ denotes a quantum analog of the Poisson bracket, which is to be found.

From Stone's theorem (see Sec. \ref{Sec_Stones_Th}), we conclude that there exits a unique self-adjoint operator $\hat{H}$ such that 
$i\hbar \ket{d\Psi(t)/dt} = \hat{H} \ket{\Psi(t)}$. Therefore, we obtain from Eq. (\ref{canonical_quantiz_EhrenfestTh})
\begin{align}
	\bra{\Psi(t)} \commut{\hat{f}, \hat{H}} \ket{\Psi(t)} = i\hbar \bra{\Psi(t)} \widehat{\Poisson{f, \mathrsfs{H}}} \ket{\Psi(t)}.
\end{align}
The stronger version of this equality gives the celebrated canonical quantization rule
\begin{align}
	\widehat{\Poisson{f, \mathrsfs{H}}} \equiv \frac{1}{i\hbar} \commut{\hat{f}, \hat{H}}.
\end{align}

\section{``Stripping'' the Schwinger Quantum Action Principle}\label{Sec_SchwingerPrinciple}

The Schwinger quantum action principle \cite{Schwinger1951, Schwinger1953, Schwinger2003, Toms2007} states that a propagator's variation is proportional to the matrix element of the quantum action's variation, 
\begin{align}
	\delta \langle \alpha, t \ket{\beta, 0} = (i/\hbar) \bra{\alpha, t} \delta\hat{S}(t) \ket{\beta, 0}.
\end{align}
Consider the case when the action's variation is induced by the system's dynamics, i.e.,
$\delta\hat{S}(t) = \frac{\partial \hat{S}(t)}{\partial t} \delta t.$

\begin{align}\label{SchwingerVarPric_Adapted}
	\frac{\delta}{\delta t} \langle \alpha, t \ket{\beta, 0}  \equiv \frac{\partial}{\partial t} \langle \alpha, t \ket{\beta, 0} 
	= (i/\hbar) \bra{\alpha, t} \frac{\partial\hat{S}(t)}{\partial t} \ket{\beta, 0}.
\end{align}
According to Stone's theorem (see Sec. \ref{Sec_Stones_Th}), 
\begin{align}
	i\hbar \frac{d}{dt} \hat{U}(t) = \hat{H} \hat{U}(t), \qquad \ket{\alpha, t} = \hat{U}(t) \ket{\psi_{\alpha}}. \nonumber
\end{align}
Equation (\ref{SchwingerVarPric_Adapted}) can be rewritten as
\begin{align}
	-i\hbar \frac{\partial}{\partial t} \bra{\psi_{\alpha}} \hat{U}^{\dagger}(t) \ket{\beta, 0} = 
	\bra{\psi_{\alpha}} \hat{U}^{\dagger}(t) \frac{\partial\hat{S}(t)}{\partial t} \ket{\beta, 0}, \nonumber\\
	\bra{\psi_{\alpha}} \hat{U}^{\dagger}(t)\hat{H} \ket{\beta, 0} =  \bra{\psi_{\alpha}} \hat{U}^{\dagger}(t) \frac{\partial\hat{S}(t)}{\partial t} \ket{\beta, 0}.
\end{align}
Since the previous equation is valid for any $\ket{\psi_{\alpha}}$ and $\ket{\beta, 0}$, we conclude that
\begin{align}
	\hat{H} =  \frac{\partial\hat{S}(t)}{\partial t},
\end{align}
which is an operator analogue of the Hamilton-Jacobi equation.

Therefore, the Hamilton-Jacobi equation lies behind the Schwinger quantum action principle. Note, however, that the Schwinger principle is more general than the Hamilton-Jacobi equation because it holds for any type of variations (for further details see Refs. \cite{Schwinger1951, Schwinger1953, Schwinger2003, Toms2007}).

\section{Measuring Time in Quantum Mechanics}\label{Sec_TimeMeasuring_QM}

Time has a peculiar role in quantum mechanics. First and foremost, it is a key independent variable in dynamical equations. However, the definition of the self-adjoint operator representing the time observable is quite challenging, and it is still a topic of on-going discussions (see, e.g., reviews \cite{Muga2000, deCarvalho2002, Maji2007}). In this section we attempt to address the problem of defining the time observable in quantum mechanics from the point of view of ODM. Note that all the concepts put forward in this section are also applicable to classical mechanics once the Hamiltonian $\hat{H}$ is substituted by the Liouvillian $\hat{L}$.

Time is physically defined and measured by clocks. We come to know the current time by simply observing the position of a clock's pointer. In other words, we obtain the value of time indirectly -- by measuring some other observable. Assume there is an observable, represented by a self-adjoint operator $\hat{\mathcal{T}}$, such that its average evolves as 
\begin{align}\label{Atemp1_def_evolution_of_T}
	\bra{\Psi(t)} \hat{\mathcal{T}} \ket{\Psi(t)}  = \alpha t,
\end{align}
where $\alpha$ is a real constant. Then, the equality $t \coloneqq \bra{\Psi} \hat{\mathcal{T}} \ket{\Psi} / \alpha$ can be taken as the definition of time through the observable $\hat{\mathcal{T}}$. The form of Eq. (\ref{Atemp1_def_evolution_of_T}) allows for the direct application of ODM to find the operator $\hat{\mathcal{T}}$,
$$
	\frac{d}{dt} \bra{\Psi(t)} \hat{\mathcal{T}} \ket{\Psi(t)}  = \alpha;
$$
whence, we obtain the commutator equation for the unknown operator
\begin{align}\label{commut_relation_for_HT}
	i \commut{ \hat{H}, \hat{\mathcal{T}} } = \alpha \hbar.
\end{align}
The classes of operators $\hat{H}$ and $\hat{\mathcal{T}}$ obeying the canonical commutation relation (\ref{commut_relation_for_HT}) are generally quite restrictive. Pauli \cite{Pauli1958} formally demonstrated that the existence of the self-adjoint time operator canonically conjugate to the Hamiltonian implies that both operators possess completely continuous spectra spanning the entire real line. This statement is known as the Pauli theorem. However, such a theorem does not withstand a thorough and rigorous analysis \cite{Galapon2002} and is incorrect in general. Recall that the canonical conjugation of the operators $\hat{\mathcal{T}}$ and $\hat{H}$ is also a theoretical justification for the energy-time uncertainty relation.

There are other ways to define time measurement in quantum mechanics via ODM. We present the following two possibilities: First, assuming that there exist a self-adjoint operator $\hat{\mathcal{T}}_1$ and a real valued function $f( \cdot )$ such that 
\begin{align}
	\bra{ \Psi(t) } \hat{\mathcal{T}}_1 \ket{ \Psi(t) } = \alpha t \bra{ \Psi(t) } f(\hat{H}) \ket{ \Psi(t) },
\end{align}
i.e., $\alpha t \coloneqq \bra{ \Psi } \hat{\mathcal{T}}_1 \ket{ \Psi } / \bra{ \Psi } f(\hat{H}) \ket{ \Psi }$, we readily obtain the commutator equation 
\begin{align}\label{commut_relation_for_HT1}
	i \commut{ \hat{H}, \hat{\mathcal{T}}_1 } = \alpha \hbar f(\hat{H}).
\end{align}
The second method allows for the observable to be time-dependent. Consider the equation
\begin{align}
	\bra{ \Psi(t) } \hat{\mathcal{T}}_2(t) \ket{ \Psi(t) } = \alpha (t),
\end{align}
assuming $\alpha(t)$ is an invertible function, such that the instantaneous value of time can be defined as
\begin{align}
	t \coloneqq \alpha^{-1} \left( \bra{ \Psi } \hat{\mathcal{T}}_2 \ket{ \Psi } \right).
\end{align}
Then, the equation for the unknown operator $\hat{\mathcal{T}}_2(t)$ reads
\begin{align}
	i \commut{ \hat{H}, \hat{\mathcal{T}}_2(t) } + \hbar \hat{\mathcal{T}}_2' (t) = \hbar \alpha'(t).
\end{align}

\section{Quantization in Curvilinear Coordinates}

\subsection{Quantum Mechanics in Curvilinear Coordinates}\label{CurvilinearCoordWaveFunc}

In this section we extend ODM to $n$-dimensional curved spaces as well as 
to Euclidean spaces in curvilinear coordinates, which are important special cases. 

The classical Hamiltonian of interest is a scalar invariant of the form 
\begin{align}
	H = \frac{1}{2m}  P_{\mu}  g^{\mu \nu} P_{\nu} + U(X).
\end{align}
Note that the Einstein summation convention is assumed throughout. The classical Hamilton equations of motion are
\begin{align}
	 \dot{X^\sigma} =& \frac{\partial H }{\partial P_{\sigma} }  =  \frac{1}{m} g^{\sigma \mu} P_{\mu},  \label{HE1}  \\
	 \dot{ P_\sigma} =& -\frac{\partial H }{\partial X^{\sigma} } =  
	 - \frac{1}{2m}  P_{\mu}  \frac{\partial g^{\mu \nu}}{\partial X^{\sigma}} P_{\nu} - \frac{\partial U }{\partial X^\sigma}.
	 \label{HE2}
\end{align}
Recall that a non-degenerate coordinate transformation 
\begin{align}\label{CoordTransformXTilde}
 X^{\mu} = X^{\mu}( \tilde{X} ) 
 \end{align}
 induces the corresponding momentum transformation 
 \begin{align}\label{MomentumTransformXTilde}
  \tilde{P}_{\mu} =  P_{\nu} \frac{\partial X^{\nu}}{\partial \tilde{X}^{\mu}  } 
  \Longrightarrow
   \frac{\partial \tilde{X}^{\alpha}}{\partial X^{\beta}} 
 	= \frac{\partial P_{\beta}}{\partial \tilde{P}_{\alpha}}. 
 \end{align}
Direct calculations show that the Poisson brackets calculated with respect to $(\tilde{X}, \tilde{P})$ obey
 \begin{align}
 \Poisson{ P_{\mu}, P_{\nu}  } =  \Poisson{ X^{\mu}, X^{\nu}  } = 0, \qquad&
 \Poisson{ X^{\mu}, P_{\nu}  } = \delta^{\mu}_{\,\,\,\nu};
 \end{align}
 therefore, the transformations (\ref{CoordTransformXTilde}) and (\ref{MomentumTransformXTilde}) are canonical.

According to ODM, the classical coordinates are replaced by expectation values of the corresponding 
self-adjoint quantum operators. However,  extra caution is required to consistently define the operators in curvilinear coordinates. The probability density 
is calculated as
\begin{align}
	 d\mathcal{P}  =  \psi^* \psi \sqrt{g}\, dX^1 dX^2 \cdots dX^n  =   \psi^* \psi \sqrt{g}\, d^n X 
\end{align}
with the essential presence of the weight $\sqrt{g}$, which can be 
absorbed as part of the wave function by defining the weighted wave function 
$\boldsymbol{\psi}$, 
\begin{align}
	\boldsymbol{\psi} = g^{1/4}\psi.
\end{align}
The scalar expectation value is calculated by means of the weighted integral
\begin{align}\label{WeightedAveragingDef}
	\langle  \hat{F} \rangle  = \int \psi^* F \psi \sqrt{g}\,d^nX, 
\end{align}
which can be expressed as
\begin{align}
	\langle  \hat{F} \rangle =
  	 \int \boldsymbol{\psi}^* g^{-\frac{1}{4}}\hat{F} \left( g^{-\frac{1}{4}} \boldsymbol{\psi} \right) \sqrt{g} \, d^n X 
 	=  \int \boldsymbol{\psi}^* g^{\frac{1}{4}}\hat{F} \left(g^{-\frac{1}{4}} \boldsymbol{\psi} \right)  d^n X.
\end{align}
According to this scheme, the scalar 
weighted operator $\boldsymbol{\hat{F}}$ and the weighted wave function $ \boldsymbol{\psi}$  must transform 
according to the following rules
\begin{align}
	  \boldsymbol{\psi} \rightarrow \boldsymbol{\overline{\psi}} = g^{1/4} \boldsymbol{\psi}, \qquad
	  \hat{\boldsymbol{F}} \rightarrow  \overline{\boldsymbol{\hat{F}}}  = g^{1/4} \hat{\boldsymbol{F}} g^{-1/4}, 
\end{align}
such that the expectation value of the weighted operator reads
\begin{align}
	  \langle \hat{\boldsymbol{F}} \rangle = \int \boldsymbol{\psi}^* \hat{\boldsymbol{F}} \boldsymbol{\psi} d^n X,
\end{align}
which is an invariant scalar under coordinate transformations.
With this in mind, the Ehrenfest theorem applied to the first Hamilton 
equation (\ref{HE1}) can be written as
\begin{align}
	\frac{d}{dt} \langle \hat{x}^{\sigma}  \rangle =
	\left\langle  
	 \frac{1}{2m} g^{\sigma \mu} g^{\frac{1}{4}}  \hat{p}_{\mu} g^{-\frac{1}{4}} + 
	  \frac{1}{2m}  g^{-\frac{1}{4}}  \hat{p}_{\mu} g^{\frac{1}{4}} g^{\sigma \mu} 
	 \right\rangle,
\end{align}
where the second term was added in the right hand side to enforce Hermiticity; whence,
\begin{align}
	\frac{d}{dt} \langle \hat{x}^{\sigma}  \rangle =
	\left \langle  
	\frac{\partial }{\partial \hat{p}_\sigma}\left(
	 \frac{1}{2m} g^{-\frac{1}{4}}  \hat{p}_{\nu} g^{\frac{1}{4}}   g^{\nu \mu} g^{\frac{1}{4}}  \hat{p}_{\mu} g^{-\frac{1}{4}}
	\right)
 	\right\rangle,
\end{align}
while the second Hamilton equation (\ref{HE2}) leads to
\begin{align}
	\frac{d}{dt} \langle \hat{p}_{\sigma}  \rangle  =  
	-\left \langle
	 \frac{\partial }{\partial \hat{x}^\sigma}\left(
	 \frac{1}{2m} g^{-\frac{1}{4}}  \hat{p}_{\nu} g^{\frac{1}{4}}   g^{\nu \mu} g^{\frac{1}{4}}  \hat{p}_{\mu} g^{-\frac{1}{4}}
	\right)
	- \frac{\partial U }{\partial \hat{x}^\sigma}
	\right \rangle.
\end{align}
According to Stone's theorem (see Sec. \ref{Sec_Stones_Th}),
\begin{align}
 	i \hbar \frac{\partial }{\partial t} \boldsymbol{\psi} = \boldsymbol{\mathcal{H}} \boldsymbol{\psi} ,
\end{align}
the derivatives of the expectation values over time are replaced by 
commutators,
\begin{align}
	\frac{d}{dt} \langle \hat{x}^{\sigma}  \rangle   = 
	 \frac{i}{\hbar} \left\langle [ \boldsymbol{\mathcal{H}} , \hat{x}^{\sigma}  ] \right\rangle, 
 	\qquad
	\frac{d}{dt} \langle \hat{p_{\sigma}}  \rangle =
	\frac{i}{\hbar} \left\langle [ \boldsymbol{\mathcal{H}} , \hat{p}_{\sigma}  ] \right\rangle. 
\end{align}
Lifting the averaging leads to the following equations for the unknown quantum Hamiltonian: 
\begin{align}
	\frac{i}{\hbar} \Commut{ \boldsymbol{\mathcal{H}} , \hat{x}^{\sigma}  } = \frac{\partial }{\partial \hat{p}_\sigma}\left(
	 \frac{1}{2m} g^{-\frac{1}{4}}  \hat{p}_{\nu} g^{\frac{1}{4}}   g^{\nu \mu} g^{\frac{1}{4}}  \hat{p}_{\mu} g^{-\frac{1}{4}}
	\right),  
	\qquad
	\frac{i}{\hbar} \Commut{ \boldsymbol{\mathcal{H}} , \hat{p}_{\sigma}  } = 
	 - \frac{\partial }{\partial \hat{x}^\sigma}\left(
	 \frac{1}{2m} g^{-\frac{1}{4}}  \hat{p}_{\nu} g^{\frac{1}{4}}   g^{\nu \mu} g^{\frac{1}{4}}  \hat{p}_{\mu} g^{-\frac{1}{4}}
	\right)
	- \frac{\partial U }{\partial x^\sigma}.
\end{align}
The latter pair of equations reduces to 
\begin{align}
	 \frac{\partial \boldsymbol{\mathcal{H}}}{\partial p_\sigma} = \frac{\partial }{\partial p_\sigma}\left(
	 \frac{1}{2m} g^{-\frac{1}{4}}  \hat{p}_{\nu} g^{\frac{1}{4}}   g^{\nu \mu} g^{\frac{1}{4}}  \hat{p}_{\mu} g^{-\frac{1}{4}}
	\right),  
	\qquad
	 -\frac{\partial \boldsymbol{\mathcal{H}}}{\partial x^\sigma} = 
	 - \frac{\partial }{\partial x^\sigma}\left(
	 \frac{1}{2m} g^{-\frac{1}{4}}  \hat{p}_{\nu} g^{\frac{1}{4}}   g^{\nu \mu} g^{\frac{1}{4}}  \hat{p}_{\mu} g^{-\frac{1}{4}}
	\right)
	- \frac{\partial U }{\partial x^\sigma},
\end{align}
after postulating the canonical commutation relations
\begin{align}
	\Commut{ \hat{x}^\sigma , \hat{p}_{\mu} } = \delta^{\sigma}_{\,\,\,\mu} i\hbar,
\end{align}
Therefore, the Hamiltonian reads
\begin{align}\label{PodolskyHamiltonian}
 	\boldsymbol{\mathcal{H}} = \frac{1}{2m} g^{-1/4} \hat{p}_{\mu} g^{1/4} g^{\mu \nu} g^{1/4}\hat{p}_{\nu} g^{-1/4} + U(\hat{x}).
\end{align}
The problem of constructing 
consistent quantum Hamiltonians directly in curvilinear coordinates without having to
rely on the Hamiltonian in Cartesian coordinates was solved by Podolsky \cite{Podolsky1928}. In particular he derived Hamiltonian 
(\ref{PodolskyHamiltonian}). 

Note that the approach presented above based on ODM is equivalent to standard tensor calculus methods. By definition, a scalar $\boldsymbol{\Phi}$ of weight $N$ 
transforms as
\begin{align}
	 \boldsymbol{\Phi} \rightarrow
 	  \overline{\boldsymbol{\Phi}} = \left |\frac{\partial x}{\partial \overline{x}} \right|^N 
	 {\boldsymbol{ \Phi}} = 
 	 g^{N/2} \boldsymbol{\Phi},
\end{align}
with the covariant derivative being
\begin{align}
	 \nabla_{\mu}  \boldsymbol{\Phi}  = 
 	 \frac{\partial  \boldsymbol{\Phi}  }{\partial x^{\mu}} - N \Gamma^{\alpha}_{\,\,\,\mu\alpha} \boldsymbol{\Phi}.
\end{align}
Similarly, a scalar $\boldsymbol{\Phi}$ of weight $1/2$ transforms as 
\begin{align}
 	\boldsymbol{\Phi} \rightarrow \boldsymbol{\overline{\Phi}} = g^{1/4} \boldsymbol{\Phi}   
\end{align}
such that the covariant derivative reads
\begin{align}
 	 \nabla_{\mu} \boldsymbol{\Phi} = 
	 \frac{\partial  \boldsymbol{\Phi}  }{\partial x^{\mu}} - \frac{1}{2} \Gamma^{\alpha}_{\,\,\,\mu\alpha} \boldsymbol{\Phi}
	   =   g^{1/4} \partial_{\mu}\,  g^{-1/4} \boldsymbol{\Phi}.
\end{align}
As a result, the following transformation rule takes place
\begin{align}
 	\nabla_{\mu} \boldsymbol{\Phi}   \rightarrow
 	  \overline{\nabla}_{\mu}  \overline{\boldsymbol{\Phi}}   
 	 = g^{1/4}  \frac{\partial x^{\alpha}}{\partial \overline{x}^{\mu}} 
 	  \nabla_{\mu} \boldsymbol{\Phi}. 
\end{align}
The contraction of a  contravariant tensor $V^{\nu}$ of weight $1/2$ with the covariant derivative is 
\begin{align}
 	 \nabla_{\mu} \boldsymbol{V}^{\mu} = 
	\frac{\partial \boldsymbol{V}^{\mu}}{\partial x^{\mu}} + \frac{1}{2} \boldsymbol{V}^{\nu} \Gamma^{\alpha}_{\,\,\,\mu\alpha}
 	=  g^{-1/4} \partial_{\mu} \left(  g^{1/4} \boldsymbol{V}^{\mu} \right).
\end{align}
This expression can be used to construct a scalar of weight $1/2$ 
by contraction,
\begin{align}
 	  \nabla_{\mu} g^{\mu\nu} \nabla_{\nu} \boldsymbol{\Phi} = 
  	  g^{-1/4} \partial_{\mu} \left(  g^{1/4} g^{\mu \nu}   g^{1/4} \partial_{\nu} g^{-1/4}  \boldsymbol{\Phi} \right),
\end{align}
such that the following transformation rule is obeyed
\begin{align}
	   \nabla_{\mu} g^{\mu\nu} \nabla_{\nu} \boldsymbol{\Phi} 
	 \rightarrow 
 	    \nabla_{\mu} \overline{g}^{\mu\nu} \nabla_{\nu} \overline{\boldsymbol{\Phi}} =
	     g^{1/4}  \nabla_{\mu} g^{\mu\nu} \nabla_{\nu} \boldsymbol{\Phi},  
\end{align}
which ultimately leads us to the conclusion that
$ 
 	 \int  \boldsymbol{\Phi}^*   \nabla_{\mu} g^{\mu\nu} \nabla_{\nu} \boldsymbol{\Phi} d^nx
$ 
is an invariant scalar under coordinate transformations.

In the context of weighted wave functions, the partial derivative is neither invariant under transformations nor a scalar.  This problem 
can be partially solved by defining the weighted momentum operator through the 
covariant derivative,
\begin{align}
	  \boldsymbol{\hat{p}_{\mu}} = -i \hbar \nabla_{\mu}.
\end{align}
The explicit form of the covariant derivative is problem dependent, e.g., if the wave function is a scalar of weight $1/2$, then the covariant derivative takes the form 
\begin{align}
	  \nabla_{\mu} \boldsymbol{\psi} =   g^{1/4} \partial_{\mu}\,  g^{-1/4} \boldsymbol{\psi}.
\end{align}
Even though such a momentum operator is properly defined, it may not necessarily be self-adjoint, and therefore, may not possess  physical expectation values. Moreover, it is more natural to calculate the expectation values of
the contravariant components, which are dimensional quantities with fixed physical meanings.

\subsection{Phase Space Representation in Curvilinear Coordinates}

Let us derive the phase space representation of quantum dynamics in curvilinear coordinates employing ODM.  Following the recipe in Sec. \ref{Sec_Qunatum_to_Calssical}, we begin by extending 
the quantum algebra with the auxiliary operators $\hat{\theta}^{\nu}$ and $\hat{\lambda}^{\nu}$ such that
\begin{align}
	  \commut{ \hat{x}^{\mu} , \hat{p}_{\nu} } = i \hbar \kappa \delta^{\mu}_{\,\,\,\,\nu}, \qquad
	  \commut{ \hat{x}^{\mu} , \hat{\lambda}_{\nu} } = i \delta^{\mu}_{\,\,\,\,\nu}, \qquad
	  \commut{ \hat{p}_{\mu} , \hat{\theta}^{\nu} } = i \delta^{\nu}_{\,\,\,\,\mu}, \qquad
	  \commut{ \hat{\lambda}_{\mu} , \hat{\theta}^{\nu} } = 0,  
	  \label{extended-quantum-algebra}
\end{align} 
where $\kappa$ is a measure of quantumness/commutativity.  

Note that contrary to Sec. \ref{CurvilinearCoordWaveFunc}, the averaging here does not require the weight $\sqrt{g}$,
\begin{align}
\langle  \hat{F} \rangle  = \int \psi^* F \psi \,d^nX d^nP
\end{align}
[compare this with Eq. (\ref{WeightedAveragingDef})]. According to Stone's theorem (Sec. \ref{Sec_Stones_Th}), the generator of dynamics $\boldsymbol{\mathcal{W}}$ in the phase space is introduced as
\begin{align}\label{W_curvil_cordinate_Stone}
	  i \hbar \frac{\partial }{\partial t} \boldsymbol{\psi} =   \boldsymbol{\mathcal{W}} \boldsymbol{\psi}.
\end{align}
The generator $\boldsymbol{\mathcal{W}}$ must satisfy the following system of equations
\begin{align}
	  \kappa \frac{\partial \boldsymbol{\mathcal{W}} }{\partial \hat{p}_\sigma} + \frac{1}{\hbar}\frac{\partial  \boldsymbol{\mathcal{W}}  }{\partial \hat{\lambda}_{\sigma} }  
	  &= \frac{\partial }{\partial \hat{p}_\sigma}\left( \frac{1}{2m} 
	     \hat{p}_{\nu}  g^{\nu \mu} \hat{p}_{\mu} \right),  \\
	  -\kappa \frac{\partial  \boldsymbol{\mathcal{W}} }{\partial \hat{x}^\sigma} + \frac{1}{\hbar}\frac{\partial  \boldsymbol{\mathcal{W}}  }{\partial \hat{\theta}^{\sigma} }
	  &= - \frac{\partial }{\partial \hat{x}^\sigma}\left( \frac{1}{2m} \hat{p}_{\nu}  g^{\nu \mu}  \hat{p}_{\mu}    \right) - \frac{\partial U }{\partial \hat{x}^\sigma}.
	  \label{W-equations}
\end{align}
The solutions of Eq. (\ref{W-equations}) is 
 \begin{align}
	  \boldsymbol{\mathcal{W}} = \frac{1}{2m\kappa} \hat{p}_{\mu} g^{\mu \nu} \hat{p}_{\nu} + \frac{1}{\kappa}U(\hat{x}) 
	      + f\left( \hat{x}^{\mu} + \hbar \kappa \hat{\theta}^{\mu} , \hat{p}_{\mu} - \hbar \kappa \hat{\lambda}_{\mu} \right), 
	  \label{Wsol1}
\end{align}
where $f$ is an arbitrary function. The quantum algebra (\ref{extended-quantum-algebra}) can be realized in terms of the classical operators $\hat{X}^{\mu}$, $\hat{P}_{\mu}$, 
$\hat{\Theta}^{\mu}$, and $\hat{\Lambda}_{\mu}$,
\begin{align}
	  \hat{x}^{\mu} = \hat{X}^{\mu} - \frac{\hbar\kappa}{2} \hat{\Theta}^{\mu}, \qquad
	  \hat{p}_{\mu} = \hat{P}_{\mu} + \frac{\hbar\kappa}{2} \hat{\Lambda}_{\mu}, \qquad
	  \hat{\lambda}_{\mu} = \hat{\Lambda}_{\mu}, \qquad
	  \hat{\theta}^{\mu} = \hat{\Theta}^{\mu}, 
	  \label{quantum-classical-op}
\end{align}
where 
\begin{align}
	  \commut{ \hat{X}^{\mu} , \hat{P}_{\nu} } = 0, \qquad
	  \commut{ \hat{X}^{\mu} , \hat{\Lambda}_{\nu} } = i \delta^{\mu}_{\,\,\,\,\nu}, \qquad
	  \commut{ \hat{P}_{\mu} , \hat{\Theta}^{\nu} } = i \delta^{\nu}_{\,\,\,\,\mu}, \qquad
	  \commut{ \hat{\Lambda}_{\mu} , \hat{\Theta}^{\nu} } = 0.
\end{align} 
The generator $\boldsymbol{\mathcal{W}}$ can now be expressed in terms of the classical operators. The function $f$ is specified by requiring that 
the classical limit is recovered as $\kappa \rightarrow 0$. Hence, the quantum generator of dynamics in phase space reads 
\begin{align}
	  \boldsymbol{\mathcal{W}} =& \frac{1}{2m\kappa} \left(  \hat{P}_{\mu} + \frac{\hbar\kappa}{2} \hat{\Lambda}_{\mu}   \right) 
	    g^{\mu \nu}( \hat{X}-\frac{\hbar\kappa}{2} \hat{\Theta}) \left(  \hat{P}_{\nu} + \frac{\hbar\kappa}{2} \hat{\Lambda}_{\nu}  \right) 
	    + \frac{1}{\kappa}U\left(  \hat{X}^{\mu} - \frac{\hbar\kappa}{2} \hat{\Theta}^{\mu}  \right) \\
	  & - \frac{1}{2m\kappa} \left(  \hat{P}_{\mu} - \frac{\hbar\kappa}{2} \hat{\Lambda}_{\mu}   \right) g^{\mu \nu}(\hat{X}+\frac{\hbar\kappa}{2}\hat{\Theta}) 
	      \left(  \hat{P}_{\nu} - \frac{\hbar\kappa}{2} \hat{\Lambda}_{\nu}  \right) 
	      - \frac{1}{\kappa}U\left(  \hat{X}^{\mu} + \frac{\hbar\kappa}{2} \hat{\Theta}^{\mu}  \right),
	      \label{general-curvilinear-W}
\end{align}
which can be expanded as
\begin{align}
  \boldsymbol{\mathcal{W}} =& 
  \frac{ g^{\mu\nu}_{-} - g^{\mu\nu}_{+} }{2m\kappa} \hat{P}_{\mu} \hat{P}_{\nu} 
  + \frac{\hbar^2 \kappa}{8m} \hat{\Lambda}_{\mu}( g^{\mu\nu}_{-} - g^{\mu\nu}_{+} )\hat{\Lambda}_{\nu}
  + \frac{\hbar}{4m} (g^{\mu\nu}_{-} + g^{\mu\nu}_{+}) 
  \hat{P}_{\mu}\hat{\Lambda}_{\nu} 
  \\
 &  +
  \frac{\hbar}{4m} \hat{P}_{\mu}\hat{\Lambda}_{\nu}  (g^{\mu\nu}_{-} + g^{\mu\nu}_{+}) 
+\frac{1}{\kappa}\left[
   U\left(  \hat{X}^{\mu} - \frac{\hbar\kappa}{2} \hat{\Theta}^{\mu}  \right) 
 - U\left(  \hat{X}^{\mu} + \frac{\hbar\kappa}{2} \hat{\Theta}^{\mu}  \right)
\right]
\end{align}
with $g^{\mu\nu}_{+} 
 = g^{\mu\nu}( X + \frac{\hbar \kappa}{2} \Theta  )$ and 
$g^{\mu\nu}_{-} 
 = g^{\mu\nu}( X - \frac{\hbar \kappa}{2} \Theta  )$. 

In the $X\Theta$-representation
\begin{align}
	  \hat{X}^{\mu} = X^{\mu}, \qquad
	  \hat{\Lambda}_{\mu}  = -i \frac{\partial \,\,\,}{ \partial X^{\mu}}, \qquad
	  \hat{P}_{\mu} =  i \frac{\partial \,\,\,}{ \partial \Theta^{\mu}}, \qquad
	  \hat{\Theta}^{\mu} = \Theta^{\mu},
	   \label{XTheta-representation}
\end{align}
the generator of motion reads
\begin{align}
 \boldsymbol{\mathcal{W}} =& -\frac{g^{\mu\nu}_{-}-g^{\mu\nu}_{+}}{2m\kappa} 
 \frac{\partial^2}{\partial \Theta^{\mu} \Theta^{\nu}} 
 - \frac{\hbar^2 \kappa}{8 m}(g^{\mu\nu}_{-} - g^{\mu\nu}_{+})
   \frac{\partial^2}{\partial X^{\mu} X^{\nu}} 
    -  \frac{\hbar^2 \kappa}{8 m}
   \frac{\partial( g^{\mu\nu}_{-}-g^{\mu\nu}_{+}  )}{X^\mu} 
   \frac{\partial}{\partial X^\nu} 
   \\ &
    +\frac{\hbar (g^{\mu\nu}_{-} + g^{\mu\nu}_{+}) }{2m}
 \frac{\partial^2}{\partial \Theta^{\mu} X^{\nu}} 
  + \frac{\hbar}{4m}
    \frac{\partial(g^{\mu\nu}_{-} + g^{\mu\nu}_{+}  )}{\partial X^{\mu}}
     \frac{\partial}{\partial \Theta^{\nu}} 
   + \frac{1}{\kappa}\left[
    U\left(X-\frac{\hbar\kappa}{2} \Theta \right) - 
     U\left(X + \frac{\hbar \kappa}{2} \Theta \right)   
     \right].  
\end{align}

 The classical limit is readily calculated from Eq. (\ref{general-curvilinear-W})
\begin{align}
 \boldsymbol{\mathcal{W}}_{\kappa \rightarrow 0} = \hbar \left(
\frac{1}{m}g^{\alpha \mu} \hat{P}_{\mu} \hat{\Lambda}_{\alpha} 
- \frac{1}{2m}
 \hat{P}_{\mu} \hat{P}_{\nu} \frac{\partial g^{\mu \nu}}{\partial X^{\alpha}}  \hat{\Theta}^{\alpha} 
  -  \frac{\partial U}{\partial X^{\mu}} \hat{\Theta}^{\mu} 
  \right).
\end{align}

Applying the $XP$-representation,
\begin{align}
	  \hat{X}^{\mu} = X^{\mu}, \qquad
	  \hat{\Lambda}_{\mu}  = -i \frac{\partial \,\,\,}{ \partial X^{\mu}}, \qquad
	  \hat{P}_{\mu} =  P, \qquad
	  \hat{\Theta}^{\mu} = -i \frac{\partial \,\,\,}{ \partial P_{\mu}},
	   \label{XTheta-representation}
\end{align}
the classical Liouville equation in curvilinear coordinates
\begin{align}
 \frac{\partial \rho}{\partial t} =
 - \frac{1}{m} g^{\mu\nu} P_{\mu} \frac{\partial \rho }{X^{\nu}} + \frac{1}{2m}P_{\mu}P_{\nu} 
 \frac{\partial g^{\mu \nu}}{\partial X^{\alpha}} \frac{\partial \rho}{\partial P_{\alpha}}
 + \frac{\partial U}{\partial X
 ^{\mu}} \frac{\partial \rho}{\partial P_{\mu}},
\end{align}
corresponding to Hamiltonian equations (\ref{HE1}) and (\ref{HE2}), is finally obtained. 

\section{Classical Field Theory}\label{Sec_Classical_FieldTh}

In the current section, considering the classical Klein-Gordon field,  we demonstrate the utility of ODM for classical field theories. Conceptually, the case of classical field theories turns out to be similar to the single classical particle case.
(The Koopman-von Neumann approach has been extended to classical field theories in Refs. \cite{Carta2006, Gozzi2011, Cattaruzza2011}.)

The Ehrenfest theorems for the classical Klein-Gordon field read
\begin{align}
	& \frac{1}{c} \frac{d}{dt} \bra{\Psi(t)} \hat{\phi}(x) \ket{\Psi(t)} = \bra{\Psi(t)} \hat{\pi}(x) \ket{\Psi(t)}, \\
	& \frac{1}{c} \frac{d}{dt} \bra{\Psi(t)} \hat{\pi}(x) \ket{\Psi(t)} = \bra{\Psi(t)} \hat{\phi}'' (x) - \mu^2 \hat{\phi} (x) \ket{\Psi(t)},
		\qquad \mu = m c / \hbar,
\end{align}
where $x$ can be interpreted as a continuous index and
\begin{align}
	& \commut{ \hat{\phi}(x), \hat{\phi}(x') } = \commut{ \hat{\pi}(x), \hat{\pi}(x') } = \commut{ \hat{\phi}(x), \hat{\pi}(x') } = 0 \Longrightarrow \notag\\
	& \commut{ \hat{\phi}^{(n)}(x), \hat{\phi}^{(m)}(x') } = \commut{ \hat{\pi}^{(n)}(x), \hat{\pi}^{(m)}(x') } = 
		\commut{ \hat{\phi}^{(n)}(x), \hat{\pi}^{(m)}(x') } = 0, \quad \forall n,m \geqslant 0.
\end{align}
According to Stone's theorem (see Sec. \ref{Sec_Stones_Th}), we introduce the generator of motion, $\hat{L}$, of the state vector, $\ket{\Psi(t)}$,
\begin{align}
	i \ket{d\Psi(t)/dt} = c \hat{L} \ket{\Psi(t)};
\end{align}
hence,
\begin{align}
	i \commut{ \hat{L}, \hat{\phi}(x) } = \hat{\pi}(x), \qquad 
	i \commut{ \hat{L}, \hat{\pi}(x) } =  \hat{\phi}''(x) - \mu^2 \hat{\phi}(x) .
\end{align}

We shall seek $\hat{L}$ in the form
\begin{align}
	\hat{L} = \int dx' L \left( \hat{\lambda}_{\phi}(x'),  \hat{\lambda}_{\pi}(x'), \hat{\phi}(x'), \hat{\pi}(x'), \ldots, \hat{\phi}^{(n)}(x'), \hat{\pi}^{(n)} (x'), \ldots \right),
\end{align}
where the auxiliary operators $\hat{\lambda}_{\phi}(x)$ and $\hat{\lambda}_{\pi}(x)$ are assumed to obey
\begin{align}
	\commut{ \hat{\phi}(x), \hat{\lambda}_{\phi}(x') } = i\delta(x-x'), \qquad \commut{ \hat{\pi}(x), \hat{\lambda}_{\pi}(x') } = i\delta(x-x'),
	\qquad \commut{ \hat{\lambda}_{\phi}(x), \hat{\lambda}_{\pi}(x') } = 0.
\end{align}
Thus, employing theorem \ref{Th_Weyl_commutator_theorem} from Sec. \ref{Sec_Weyl_calculus}, we get
\begin{align}
	\hat{L} = \int dx' \left\{ \hat{\pi}(x') \hat{\lambda}_{\phi}(x') + \left[ \hat{\phi}''(x') -\mu^2\hat{\phi}(x') \right] \hat{\lambda}_{\pi}(x') 
			+ F\left(\ldots, \hat{\phi}^{(n)}(x'), \hat{\pi}^{(n)}(x'), \ldots \right) \right\},	
\end{align}
where $F=F(\ldots, \hat{\phi}^{(n)}(x), \hat{\pi}^{(n)}(x), \ldots)$ denotes an arbitrary real functional of derivatives of $\hat{\phi}(x)$ and $\hat{\pi}(x)$.

We shall find the equation of motion for the quantity $|\langle \phi(x) \, \pi(x) \ket{\Psi(t)}|^2$ -- the probability density for a Klein-Gordon field's state being given by $\phi(x)$ and $\pi(x)$ at time moment $t$. We introduce the notation
\begin{align}
	& \hat{\phi}(x) \ket{ \phi(x') \, \pi(x') } = \phi(x) \ket{ \phi(x') \, \pi(x') }, \qquad \hat{\pi}(x) \ket{ \phi(x') \, \pi(x') } = \pi(x) \ket{ \phi(x') \, \pi(x') } \Longrightarrow \notag\\
	& \hat{\phi}^{(n)}(x) \ket{ \phi(x') \, \pi(x') } = \phi^{(n)}(x) \ket{ \phi(x') \, \pi(x') }, \qquad \hat{\pi}^{(n)}(x) \ket{ \phi(x') \, \pi(x') } = \pi^{(n)}(x) \ket{ \phi(x') \, \pi(x') }.
\end{align}
The functional derivative of $F[f(x)]$ is defined as
\begin{align}
	\frac{ \delta F[f(x)]}{\delta f(x')} \coloneqq \lim_{\varepsilon \to 0} \frac{ F[f(x) + \varepsilon \delta(x-x')] - F[f(x)] }{\varepsilon}.
\end{align}
Whence,
\begin{align}
	\frac{ \delta }{\delta f(x')} \left\{ F[f(x)] G[f(x)] \right\} = G[f(x)] \frac{ \delta F[f(x)] }{\delta f(x')} + F[f(x)] \frac{ \delta G[f(x)] }{\delta f(x')}.
\end{align}
The operators $\hat{\lambda}_{\phi}(x)$, $\hat{\lambda}_{\pi}(x)$, $\hat{\phi}(x)$, and $\hat{\pi}(x)$ in the $\phi \pi$-representation read
\begin{align}
	\hat{\lambda}_{\phi}(x) = -i \frac{\delta}{\delta \phi(x)} + G\left(\ldots, \hat{\phi}^{(n)}(x'), \hat{\pi}^{(n)}(x'), \ldots \right), \qquad 
	\hat{\lambda}_{\pi}(x) = -i \frac{\delta}{\delta \pi(x)}, \qquad \hat{\phi}(x) = \phi(x), \qquad \hat{\pi}(x) = \pi(x),
\end{align}
where $G$ is any real functional.

\begin{align}
	\frac{1}{c} \frac{\partial}{\partial t} \langle \phi(x) \, \pi(x) \ket{\Psi(t)} =
		\int dx' \left\{ -\pi(x')\frac{\delta}{\delta \phi(x')} - \left[ \phi''(x') - \mu^2 \phi(x') \right] \frac{\delta}{\delta \pi(x')} 
		-iF \right\} \langle \phi(x) \, \pi(x) \ket{\Psi(t)}.
\end{align}
Here $F$ has ``absorbed'' $G$.
\begin{align}
	\frac{1}{c} \frac{\partial}{\partial t} \left| \langle \phi(x) \, \pi(x) \ket{\Psi(t)} \right|^2 =
		\int dx' \left\{ -\pi(x')\frac{\delta}{\delta \phi(x')} - \left[ \phi''(x') - \mu^2 \phi(x') \right] \frac{\delta}{\delta \pi(x')} \right\} 
		\left| \langle \phi(x) \, \pi(x) \ket{\Psi(t)} \right|^2 .
\end{align}
This equation is a first order functional partial differential equation. We employ the continuous analogue of the method of characteristics to get
\begin{align}
	c \delta t = \frac{\delta \phi(x)}{\pi(x)} = \frac{\delta \pi(x)}{\phi''(x) - \mu^2 \phi(x)} \Longrightarrow 
	\frac{1}{c} \frac{\delta \phi(x)}{\delta t} = \pi(x), \quad \frac{1}{c} \frac{\delta \pi(x)}{\delta t} = \phi''(x) - \mu^2 \phi(x).
\end{align}
These equations coincide with the classical Klein-Gordon equation 
\begin{align}\label{Classical_KleinGordon_Eq}
	\left[ \frac{1}{c^2} \frac{\partial^2}{\partial t^2} - \frac{\partial^2}{\partial x^2} + \mu^2 \right]\phi(x,t) = 0.
\end{align}

\section{Quantum Field Theory}\label{Sec_Quantum_FieldTh}

Now we shall employ ODM to perform the Bose and Fermi second quantization of the Schr\"{o}dinger equation. The application of ODM to other quantum field theoretic models should be straightforward. 

As before, we start from the Ehrenfest theorems 
\begin{align}
	& \frac{d}{dt} \bra{\Psi(t)} \hat{\psi}(x) \ket{\Psi(t)} = 
		\bra{\Psi(t)} -\frac{i}{\hbar} U(x) \hat{\psi}(x) + \frac{i \hbar}{2m} \frac{\partial^2 \hat{\psi}(x)}{\partial x^2} \ket{\Psi(t)}, \\
	& \frac{d}{dt} \bra{\Psi(t)} \hat{\psi}^{\dagger}(x) \ket{\Psi(t)} = 
		\bra{\Psi(t)} \frac{i}{\hbar} U(x) \hat{\psi}^{\dagger}(x) - \frac{i \hbar}{2m} \frac{\partial^2 \hat{\psi}^{\dagger}(x)}{\partial x^2} \ket{\Psi(t)}.
\end{align}
Note that the operator $\hat{\psi}(x)$ is not self-adjoint; thus, these Ehrenfest theorems are for complex quantities $\hat{\psi}(x)$ and $\hat{\psi}^{\dagger}(x)$.

Having introduced the generator of motion $\hat{H}$ by means of Stone's theorem (see Sec. \ref{Sec_Stones_Th})
\begin{align}
	i\hbar \ket{d \Psi(t)/dt} = \hat{H} \ket{\Psi(t)},
\end{align}
with $\ket{\Psi(t)}$ being an element of a Fock space, we find
\begin{align}\label{SchrodingerEq_commutator_Eqs}
	\commut{ \hat{H}, \hat{\psi}(x) } = -U(x)\hat{\psi}(x) + \frac{\hbar^2}{2m} \frac{\partial^2 \hat{\psi}(x) }{ \partial x^2}, \qquad
	\commut{ \hat{H}, \hat{\psi}^{\dagger}(x) } = U(x)\hat{\psi}^{\dagger}(x) - \frac{\hbar^2}{2m} \frac{\partial^2 \hat{\psi}^{\dagger}(x)}{ \partial x^2}.
\end{align}

\subsection{Bose Quantization}

In the Bose case, we postulate the following commutation relations:
\begin{align}
	\commut{ \hat{\psi}(x), \hat{\psi}(x') } = \commut{ \hat{\psi}^{\dagger}(x), \hat{\psi}^{\dagger}(x') } = 0, \qquad
		\commut{ \hat{\psi}(x), \hat{\psi}^{\dagger}(x') } = \delta(x-x');
\end{align}
whence,
\begin{align}
	& \Commut{ \frac{\partial \hat{\psi}^{\dagger}(x')}{\partial x'}, \hat{\psi}(x) } \coloneqq 
		\lim_{\varepsilon\to 0} \frac{ \commut{\hat{\psi}^{\dagger}(x'+\varepsilon), \hat{\psi}(x)} - \commut{\hat{\psi}^{\dagger}(x'), \hat{\psi}(x)}}{\varepsilon}
		= -\delta' (x'-x), \qquad \Commut{ \frac{\partial \hat{\psi}(x')}{\partial x'}, \hat{\psi}^{\dagger}(x) } = \delta' (x'-x), \notag\\ 
	& \Commut{ \frac{\partial \hat{\psi}(x)}{\partial x}, \frac{\partial \hat{\psi}(x')}{\partial x'} } =  
		\Commut{ \frac{\partial \hat{\psi}^{\dagger}(x)}{\partial x}, \frac{\partial \hat{\psi}^{\dagger}(x')}{\partial x} } = 0. \label{derivative_commutators_QFT}
\end{align}
and seek the generator of dynamics in the form
\begin{align}
	\hat{H} = \int dx' \mathrsfs{H} \left(x', \hat{\psi}(x'), \hat{\psi}^{\dagger}(x'), \frac{\partial \hat{\psi}(x')}{\partial x'}, \frac{\partial \hat{\psi}^{\dagger}(x')}{\partial x'} \right).
\end{align}
Using theorem \ref{Th_Weyl_commutator_theorem} from Sec. \ref{Sec_Weyl_calculus} and Eqs. (\ref{derivative_commutators_QFT}), we obtain
\begin{align}
	& -\mathrsfs{H}'_{\psi^{\dagger}} + \mathrsfs{H}''_{\frac{\partial\psi^{\dagger}}{\partial x}, \, x} 
		+ \mathrsfs{H}''_{\frac{\partial\psi^{\dagger}}{\partial x}, \, \psi} \frac{\partial \psi}{\partial x} 
		+  \mathrsfs{H}''_{\frac{\partial\psi^{\dagger}}{\partial x}, \, \psi^{\dagger}} \frac{\partial \psi^{\dagger}}{\partial x}
		+  \mathrsfs{H}''_{\frac{\partial\psi^{\dagger}}{\partial x}, \, \frac{\partial \psi}{\partial x}} \frac{\partial^2 \psi}{\partial x^2}
		+  \mathrsfs{H}''_{\frac{\partial\psi^{\dagger}}{\partial x}, \, \frac{\partial \psi^{\dagger}}{\partial x}} \frac{\partial^2 \psi^{\dagger}}{\partial x^2}
		= -U(x)\psi(x) + \frac{\hbar^2}{2m} \frac{\partial^2 \psi}{\partial x^2}, \\
	& -\mathrsfs{H}'_{\psi} + \mathrsfs{H}''_{\frac{\partial\psi}{\partial x}, \, x}
		+ \mathrsfs{H}''_{\frac{\partial\psi}{\partial x}, \, \psi} \frac{\partial \psi}{\partial x} 
		+  \mathrsfs{H}''_{\frac{\partial\psi}{\partial x}, \, \psi^{\dagger}} \frac{\partial \psi^{\dagger}}{\partial x}
		+  \mathrsfs{H}''_{\frac{\partial\psi}{\partial x}, \, \frac{\partial \psi}{\partial x}} \frac{\partial^2 \psi}{\partial x^2}
		+  \mathrsfs{H}''_{\frac{\partial\psi}{\partial x}, \, \frac{\partial \psi^{\dagger}}{\partial x}} \frac{\partial^2 \psi^{\dagger}}{\partial x^2}
		= -U(x)\psi^{\dagger}(x) + \frac{\hbar^2}{2m} \frac{\partial^2 \psi^{\dagger}}{\partial x^2}.
\end{align}
A solution of these equations is 
\begin{align}
	\mathrsfs{H} = \frac{\hbar^2}{2m} \frac{\partial \psi^{\dagger}}{\partial x}\frac{\partial \psi}{\partial x}
		+  C(x) \psi^{\dagger} \frac{\partial \psi}{\partial x} 
		+  C(x) \frac{\partial \psi^{\dagger}}{\partial x} \psi + \left[ U(x) + \frac{\partial C(x)}{\partial x}\right] \psi^{\dagger} \psi,
\end{align}
where $C(x)$ is any real function. Finally, the generator of motion reads 
\begin{align}
	\hat{H} =& \int dx' \left[ \frac{\hbar^2}{2m} \frac{ \partial \hat{\psi}^{\dagger}(x')}{\partial x'} \frac{ \partial \hat{\psi}(x')}{\partial x'}
		+ U(x') \hat{\psi}^{\dagger}(x') \hat{\psi}(x') \right],
\end{align}
where the term that can be represented as the total derivative under the integral was discarded.

\subsection{Fermi Quantization}\label{Sec_FermiQuantization}

In the Fermi case, the following anti-commutation conditions should be used
\begin{align}
	\acommut{ \hat{\psi}(x), \hat{\psi}(x') } = \acommut{ \hat{\psi}^{\dagger}(x), \hat{\psi}^{\dagger}(x') } = 0, \qquad
		\acommut{ \hat{\psi}(x), \hat{\psi}^{\dagger}(x') } = \delta(x-x');
\end{align}
whence,
\begin{align}
	& \Acommut{ \frac{\partial \hat{\psi}(x')}{\partial x'}, \hat{\psi}(x) } = \Acommut{ \frac{\partial \hat{\psi}(x')}{\partial x'}, \frac{\partial \hat{\psi}(x)}{\partial x} } 
		= \Acommut{ \frac{\partial \hat{\psi}^{\dagger}(x')}{\partial x'}, \hat{\psi}^{\dagger}(x) } 
		= \Acommut{ \frac{\partial \hat{\psi}^{\dagger}(x')}{\partial x'}, \frac{\partial \hat{\psi}^{\dagger}(x)}{\partial x} } = 0, \notag\\
	& \Acommut{ \frac{\partial \hat{\psi}^{\dagger}(x')}{\partial x'}, \hat{\psi}(x) } 
		= \Acommut{ \frac{\partial \hat{\psi}(x')}{\partial x'}, \hat{\psi}^{\dagger}(x) } = \delta' (x'-x).
\end{align}
The crucial difference between the Fermi and Bose cases is the following:
\begin{align}
	\left[ \hat{\psi}(x) \right]^n = \left[ \hat{\psi}^{\dagger} (x) \right]^n = \left[ \frac{\partial \hat{\psi}(x)}{\partial x} \right]^n 
	= \left[ \frac{\partial \hat{\psi}^{\dagger}(x)}{\partial x} \right]^n  = 0, \qquad \forall n \geqslant 2,
\end{align}
i.e., any power series of anti-commuting variables terminates after the linear term. Therefore, the generator of motion should be sought in the form
\begin{align}\label{FermiQuantization_SchEq_DefHamiltonian}
	\hat{H} =& \int dx' \left[ a_1(x') \hat{\psi}^{\dagger}(x') \hat{\psi}(x') + a_2(x') \frac{ \partial \hat{\psi}^{\dagger}(x')}{\partial x'} \hat{\psi}(x')  
		+ a_2^*(x') \hat{\psi}^{\dagger}(x') \frac{ \partial \hat{\psi}(x')}{\partial x'} 
		+  a_3(x') \frac{ \partial \hat{\psi}^{\dagger}(x')}{\partial x'} \frac{ \partial \hat{\psi}(x')}{\partial x'} \right].
\end{align}

Theorem \ref{Th_Weyl_commutator_theorem} is not convenient in the Fermi case because it is solely based on commutation relations. However, the following identity is a more suitable tool in the case of anti-commuting operators:
\begin{align}
	\commut{ \hat{A}_1 \cdots \hat{A}_{2n}, \hat{B} } &= \sum_{k=1}^{2n} (-1)^k \hat{A}_1 \cdots \hat{A}_{k-1} \acommut{ \hat{A}_k, 		\hat{B} } \hat{A}_{k+1} \cdots \hat{A}_{2n}. \label{commutator_A2N_V2}
\end{align}
We shall prove this equation by induction. Assuming that Eq. (\ref{commutator_A2N_V2}) is correct and utilizing
\begin{align} 
	\commut{ \hat{B}\hat{C}, \hat{A} } =  \commut{\hat{B}, \hat{A}}\hat{C} + \hat{B}\commut{\hat{C}, \hat{A}}, \qquad
	\commut{ \hat{B}\hat{C}, \hat{A} } = -\acommut{\hat{B}, \hat{A}}\hat{C} + \hat{B}\acommut{\hat{C}, \hat{A}}, \notag
\end{align} 
we obtain 
\begin{align}
	\commut{ \hat{A}_1 \cdots \hat{A}_{2n} \hat{A}_{2n+1} \hat{A}_{2n+2}, \hat{B} } =& 
	\commut{ \hat{A}\cdots\hat{A}_{2n}, \hat{B} } \hat{A}_{2n+1}\hat{A}_{2n+2} +  \hat{A}_1\cdots\hat{A}_{2n} \commut{\hat{A}_{2n+1}\hat{A}_{2n+2}, \hat{B}} \nonumber\\
	=& \sum_{k=1}^{2n} (-1)^k \hat{A}_1 \cdots \acommut{\hat{A}_k, \hat{B}} \cdots \hat{A}_{2n}\hat{A}_{2n+1}\hat{A}_{2n+2} \nonumber\\
	&	-\hat{A}_1 \cdots \hat{A}_{2n} \acommut{\hat{A}_{2n+1}, \hat{B}} \hat{A}_{2n+2}  + \hat{A}_1 \cdots \hat{A}_{2n}\hat{A}_{2n+1} \acommut{\hat{A}_{2n+2}, \hat{B}} \nonumber\\
	=& \sum_{k=1}^{2n+2} (-1)^k \hat{A}_1 \cdots \acommut{\hat{A}_k, \hat{B}} \cdots \hat{A}_{2n+2}. \nonumber
\end{align}
Hence, Eq. (\ref{commutator_A2N_V2}) is verified.

Now substituting Eq. (\ref{FermiQuantization_SchEq_DefHamiltonian}) into Eqs. (\ref{SchrodingerEq_commutator_Eqs}) and utilizing Eq. (\ref{commutator_A2N_V2}), we find 
\begin{align}
	& \left[ \frac{\partial a_2(x)}{\partial x} - a_1(x) \right]\hat{\psi}(x) + \left[ a_2(x) - a_2^*(x) + \frac{\partial a_3(x)}{\partial x} \right] \frac{\partial \hat{\psi}(x)}{\partial x} 
		+ a_3(x) \frac{\partial^2 \hat{\psi}(x)}{\partial x^2} = -U(x)\hat{\psi}(x) + \frac{\hbar^2}{2m} \frac{\partial^2 \hat{\psi}(x) }{ \partial x^2}, \\
	& \left[ a_1(x) - \frac{\partial a_2^*(x)}{\partial x} \right]\hat{\psi}^{\dagger}(x) + 
		\left[ a_2(x) - a_2^*(x) - \frac{\partial a_3(x)}{\partial x} \right] \frac{\partial \hat{\psi}^{\dagger}(x)}{\partial x}
		- a_3(x) \frac{\partial^2 \hat{\psi}^{\dagger}(x)}{\partial x^2}
		= U(x)\hat{\psi}^{\dagger}(x) - \frac{\hbar^2}{2m} \frac{\partial^2 \hat{\psi}^{\dagger}(x)}{ \partial x^2}.
\end{align}
Thus, the generator of motion is of the form
\begin{align}
	\hat{H} =& \int dx' \left[ \frac{\hbar^2}{2m} \frac{ \partial \hat{\psi}^{\dagger}(x')}{\partial x'} \frac{ \partial \hat{\psi}(x')}{\partial x'}
		+ U(x') \hat{\psi}^{\dagger}(x') \hat{\psi}(x') \right],
\end{align}
where the term that can be represented as the total derivative under the integral was discarded.

\end{widetext}

\bibliography{literature}
\end{document}